\documentclass[11pt,a4paper]{article}
\usepackage[latin,english]{babel}
\usepackage[utf8]{inputenc}
\usepackage[T1]{fontenc}
\usepackage{eufrak}
\usepackage{geometry}
\usepackage{amsmath}
\usepackage{bbm}
\usepackage{amssymb}
\usepackage{amsbsy}
\usepackage{amsthm}
\usepackage{makeidx}
\usepackage{mathtools}
\usepackage{mathabx, mathrsfs, dsfont}
\usepackage{cases}
\usepackage{braket}
\usepackage[toc,page]{appendix}
\usepackage{dirtytalk}
\usepackage{pdfsync}
\usepackage{graphicx}
\usepackage{epstopdf}
\usepackage{todonotes}
\usepackage{graphicx}
\usepackage{fancyhdr}
\usepackage{math}
\usepackage{cite}
\usepackage{color}
\usepackage{subcaption}
\usepackage{bbm}
\usepackage[colorlinks=true, pdfstartview=FitV, urlcolor=blue, citecolor=red, linkcolor=blue]{hyperref}
\mathtoolsset{showonlyrefs}

\newcommand{\Volt}{\textrm{Volt}}
\newcommand{\trace}[1]{\mathrm{Tr}\left( #1\right)}

\newcommand{\dr}[1]{\left[#1\right]^{\textrm{dr}}}
\newcommand{\ve}{v_{\textrm{eff}}}

\numberwithin{equation}{section}

\geometry{margin=1in}
\title{Generalized Hydrodynamics for the Volterra lattice: ballistic and non-ballistic behavior of correlation functions}
\author{Guido Mazzuca\footnote{Tulane University, Louisiana, USA. 
\texttt{Mail}: gmazzuca@tulane.edu}}

\date{\today}

\begin{document}
	\maketitle

\begin{abstract}
	In recent years, a lot of effort has been put in describing the hydrodynamic behavior of integrable systems. In this paper, we describe such picture for the Volterra lattice. Specifically, we are able to explicitly compute the \textit{susceptibility matrix} and the \textit{current-field} correlation matrix in terms of the density of states of the Volterra lattice endowed with a Generalized Gibbs ensemble. Furthermore, we apply the theory of linear Generalized Hydrodynamics to describe the Euler scale behavior of the correlation functions. We anticipate that the solution to the Generalized Hydrodynamics equations develops shocks at $\xi_0=\frac{x}{t}$; so this linear approximation does not fully describe the behavior of correlation functions. Intrigued by this fact, we performed several numerical investigations which show that, exactly when the solution to the hydrodynamic equations develops shock, the correlation functions show an highly oscillatory behavior. In view of this empirical observation, we believe that at this point  $\xi_0$ the diffusive contribution are not sub-leading corrections to the ballistic transport, but they are of the same order.
\end{abstract}

        \section{Introduction}
        In recent years, a lot of effort has been put in describing the hydrodynamic behavior of integrable systems, i.e. dynamical system whose evolution can be explicitly computed in terms of the initial data. Specifically, it has been a big mathematical challenge to fully describe the correlation functions of such integrable models.  Recently, physicists have introduced a new theory that aims to describe the behavior of such functions, the so-called \textit{Generalized Hydrodynamics} (GHD) \cite{DoyonNone,GHD1,GHD2}{ \color{black}, see also the recent review \cite{GHD3} and the reference therein}. The underline idea of this theory is to obtain a set of hydrodynamic equations describing the macroscopic evolution of the considered medium; those equations also describes the evolution of the correlation functions. 
        
        Despite not being fully mathematically rigorous, using this theory H. Spohn was able to describe the behavior  of the correlation functions for the Toda lattice \cite{spohn2021hydrodynamicToda,Spohn2019a,Spohn2020}, see also \cite{Doyon2019,Bulchandani2019}. His results were confirmed by comparing the prediction of the generalized hydrodynamics with numerical simulations, see \cite{mazzuca2023equilibrium}. 
        
         H. Spohn was able to carry out this computation relaying on results from \textit{Random Matrix theory} (RMT). In particular, {\color{black} enforcing the relation between the Toda lattice and the classical Real Beta ensemble \cite{Dumitriu2002,mazzuca2021mean},} he was able to  describe the linear approximation of the correlation functions of the Toda lattice. {\color{black}This relation between the Toda lattice and the Real beta ensemble is not unique.} Indeed, after Spohn breakthrough, several authors enforced this idea in order to describe statistical properties of the dynamical systems at hand. For example,  in \cite{mazzuca2021generalized}, the authors were able to describe the density of states of the Ablowitz-Ladik lattice in terms of the one of the circular $\beta$ ensemble \cite{KillipNenciu2004}, independently Spohn obtained an analogous result \cite{spohn2021hydrodynamic}. In \cite{Guionnet2022,Mazzuca2023}, the authors obtained a large deviation principle linking the Toda  lattice and the Ablowitz-Ladik lattice with the Real $\beta$ ensemble and the Circular $\beta$ ensemble respectively. Another interesting result in this direction is \cite{Grava2023}, in this paper the authors established connections between the classical Gibbs ensemble for the Exponential Toda lattice and the Volterra lattice with the Laguerre ensemble \cite{Dumitriu2002} and the Antisymmetric $\beta$-ensemble \cite{Dumitriu_Forrester}, respectively. We want to mention also of the work \cite{Grava2021Correlation}, where the authors computed explicitly the correlation function for the short range harmonic chain; they were also able to describe the long time asymptotic of those correlations in great details.

         {\color{black}
          GHD has a wide range of applications. For example, it can be applied to several \textit{quantum integrable models} to compute their correlation functions and, in general, their thermodynamic behavior \cite{GHD2,QuantumGHD1,Bastianello2,QuantumGHD2}. Furthermore, it is possible to fully describe the thermodynamics of several classical integrable models by means of semiclassical limit of their quantum counterpart, which can be analyzed through GHD \cite{Bastianello1,Bastianello3}.
         GHD has also been an important tool to describe the soliton gas picture for several integrable PDE models, see \cite{ElWhithamEq,El2005,El2010,Doyon2018,Girotti2021,Bonnemain2022}. For a wider introduction to the topic, we refer to the recent reviews  \cite{Soliton1,Soliton2} and the references therein. }
         
        In this paper,  we consider the \textit{Volterra lattice} \cite{Moser75} and we compute the \textit{susceptibility matrix} and the \textit{current-field correlation matrix}. Furthermore, we apply the theory of Generalized hydrodynamics to describe the Euler scale behavior of the correlation functions.  We anticipate that the solution to the differential equations describing the Euler scale dynamics \textit{develops shock} for some explicit value $\xi_0=\frac{x}{t}$. { \color{black}The reason for this shock is that the effective velocity of the system $v_{\text{eff}}(\lambda) $ has a minimum at $0$. Such behavior was also observed in other quantum systems, see \cite{QuantumGHD1,Shock1}, and has also some analogy with the classical chain of couple oscillator, see \cite{mazzuca2023equilibrium}. In the later case, the correlation functions have different scaling properties around the minimum point. We mention that a new approach to describe such shock formation through Witham modulation theory was recently developed in \cite{Whitham1}, and this approach might be extended to our case. }
        
        \begin{figure}
\centering
\includegraphics{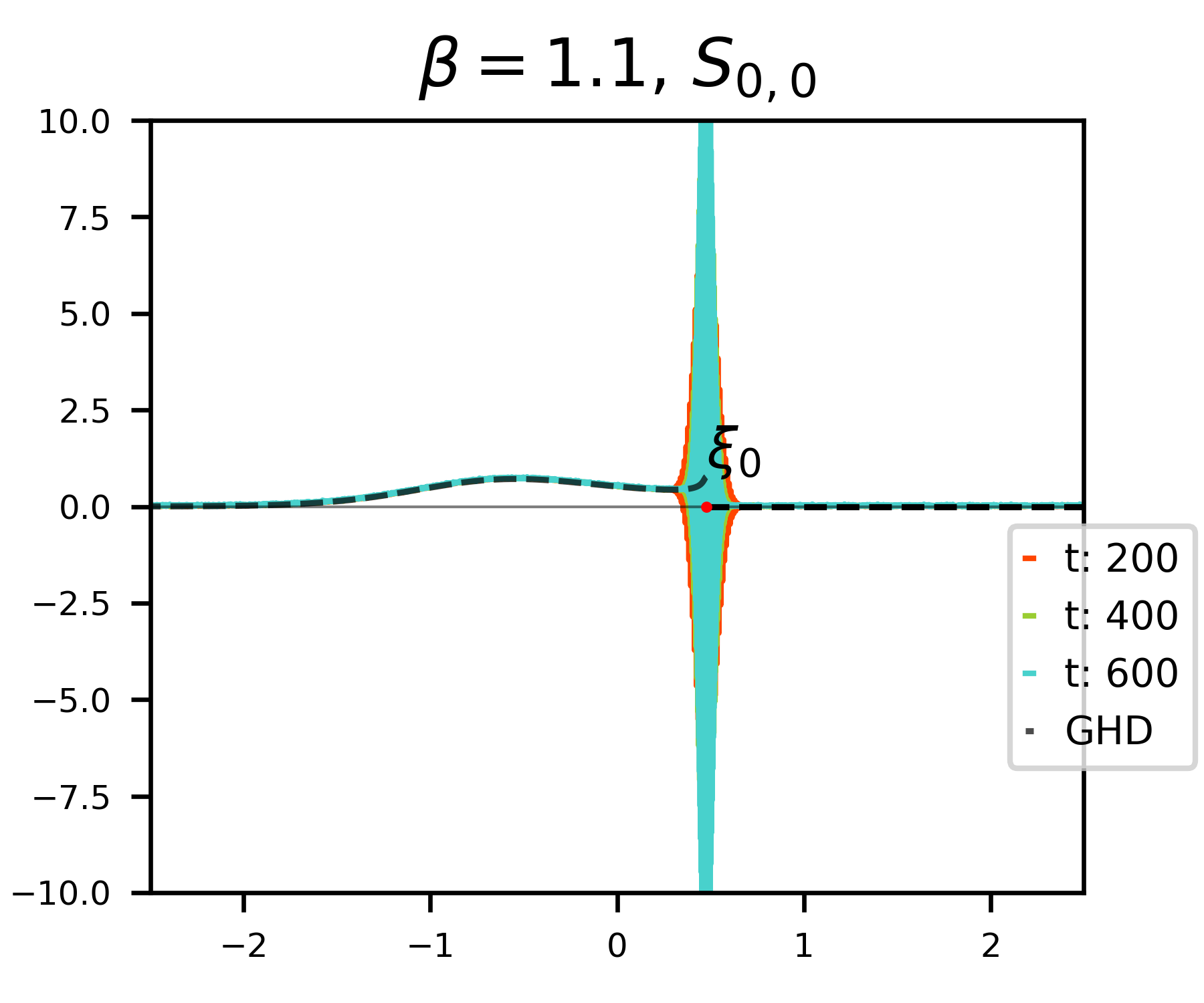}
\includegraphics{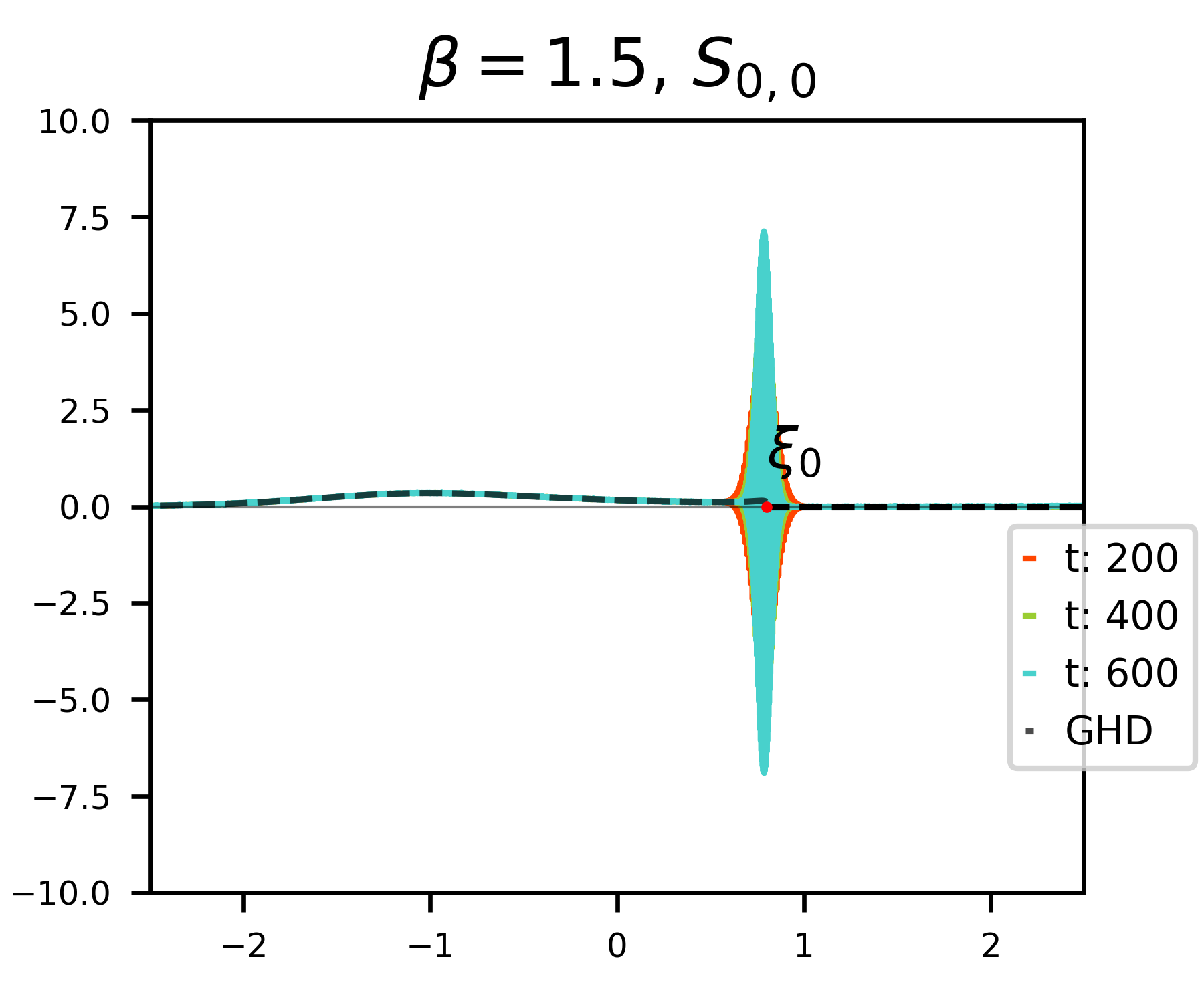}
\includegraphics{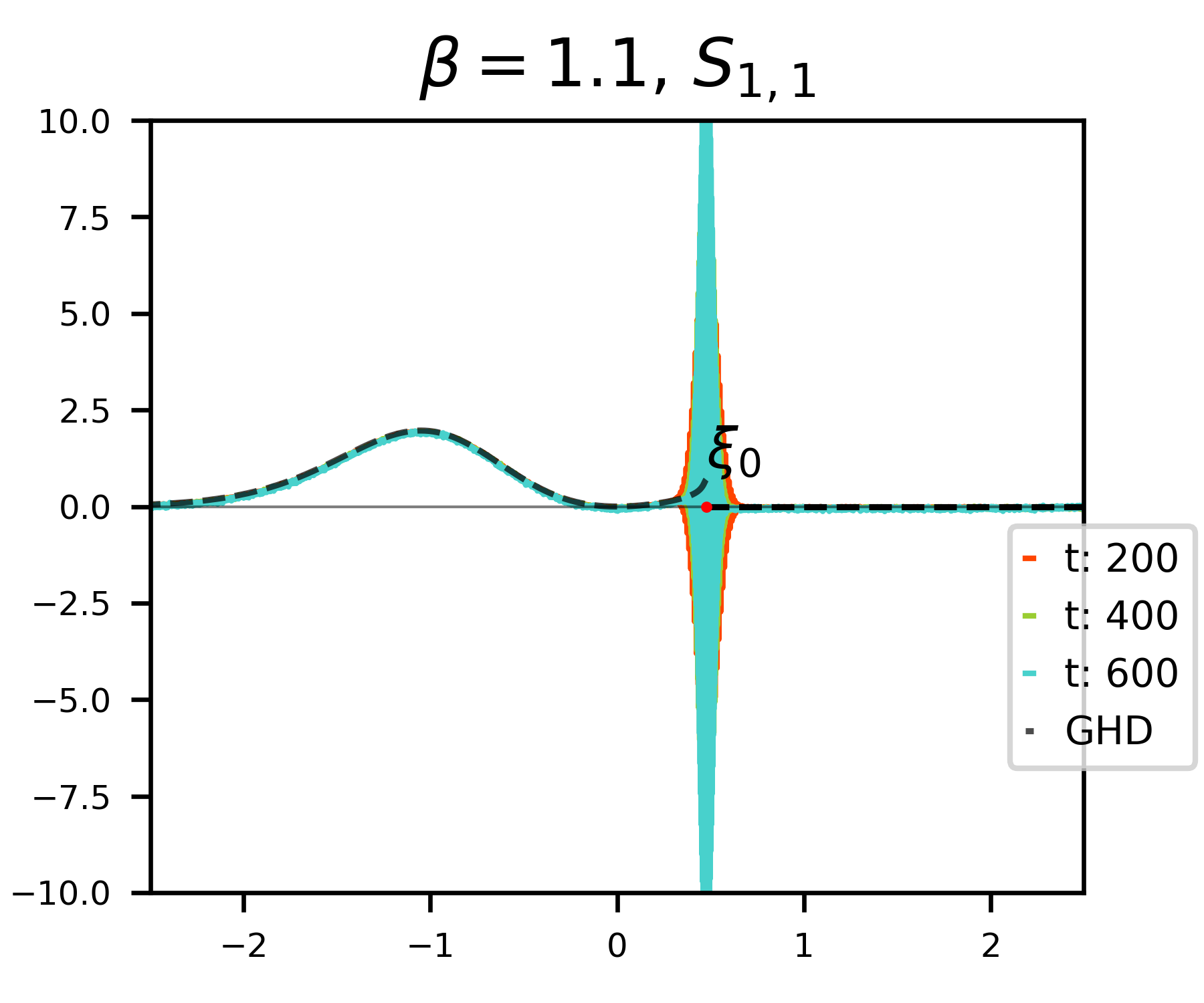}
\includegraphics{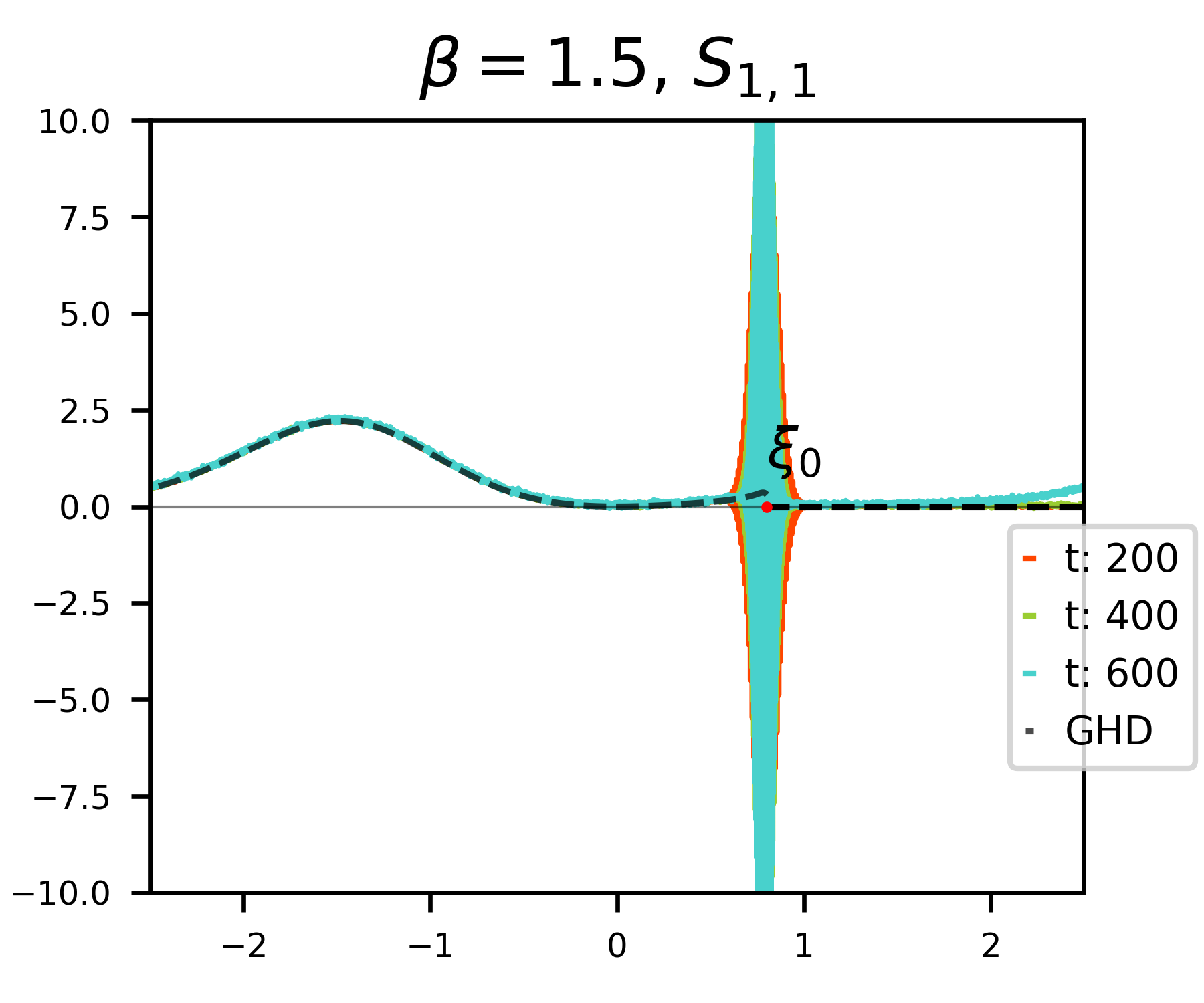}
\includegraphics{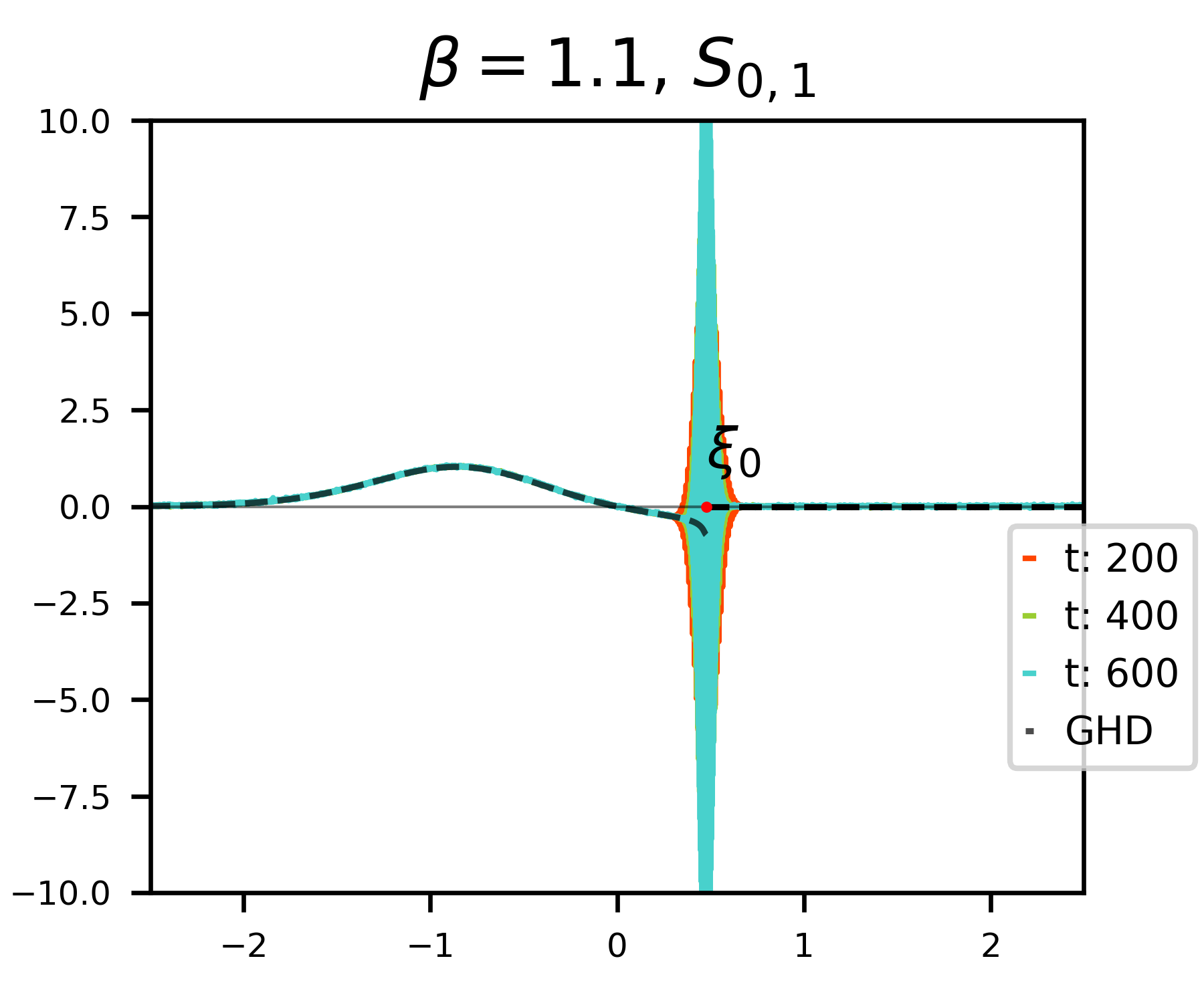}
\includegraphics{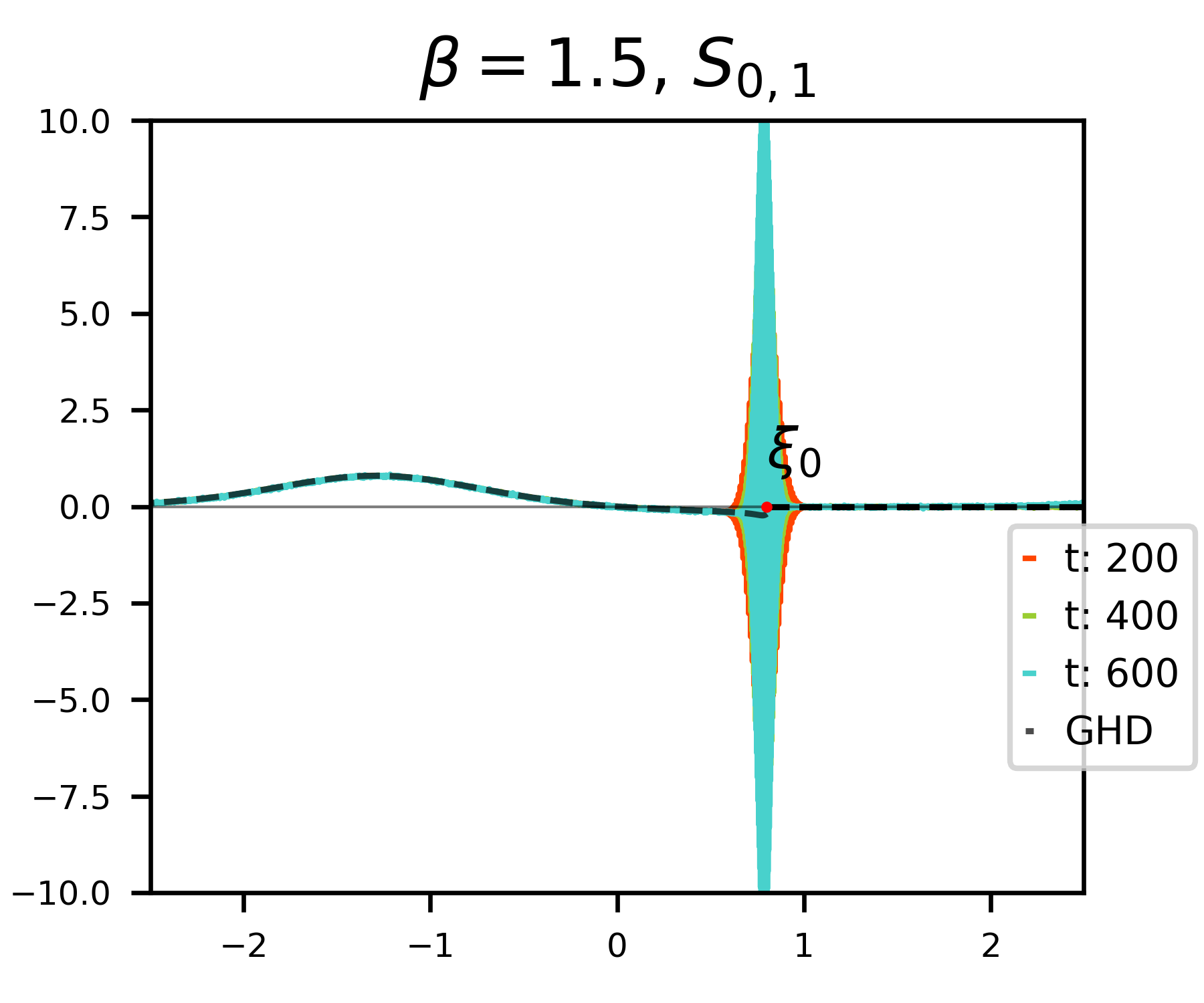}
\caption{Volterra correlation functions: GHD predictions vs molecular dynamics simulation. Left panels: number of particles: $3000$, trials: $10^6$, $\beta=1.1$. Right panels: number of particles: $3000$, trials: $10^6$, $\beta=1.5$. The variable on the $x$-axis is $\xi=\frac{x}{t}$. }
\label{fig:paragone1}
\end{figure}

Intrigued by {\color{black} the shock formation}, we perform several numerical experiments to  investigate the behavior of the correlation functions, the results are displayed in Figure \ref{fig:paragone1}. {\color{black} After the usual ballistic scaling, we notice that at the point $\xi_0$ where the GHD predictions - displayed with dashed lines - develop shock, the correlation functions show an highly oscillatory behavior with a different scaling, exactly as in the case of the harmonic chain.  }

        {\color{black}
The manuscript is organized as follows. In section \ref{sec:model_res}, we introduce our model of interest, i.e. the Volterra lattice, and state our results. In section  \ref{sec:TheoBack}, we present the theoretical framework that we exploit to compute the susceptibility matrix $C$ and the charge-current correlation matrix $B$ (Theorem \ref{thm:main}). Specifically, we recall the results in \cite{Mazzuca2024CLT}, and we used them to compute the susceptibility matrix $C$ and the mean value of the currents $J^{[k]}$ in terms of the free energy \eqref{eq:free_energy_volterra}; furthermore, we formally describe the density of states of the model. In section \ref{sec:antisym}, we introduce the Antisymmetric Gaussian $\beta$ ensemble in the high temperature regime; this is a random matrix ensemble introduced by \cite{Dumitriu_Forrester}, we enforce several results related to this ensemble in order to prove Theorem \ref{thm:main}. In section \ref{sec:computationCB}, we prove Theorem \ref{thm:main}, i.e. we explicitly compute the susceptibility matrix $C$ and the charge-current correlation matrix $B$ in terms of the density of states of the Volterra lattice. This is the main analytic contribution of our paper. In section \ref{sec:linearhydro}, we apply the theory of generalized hydrodynamics to obtain the linear order approximation of the correlation functions for the Volterra lattice. Finally, in section \ref{sec:numerical_results}, we describe the numerical results that we obtained and the procedure that we applied.}
         
         \section{Model and Results}
         \label{sec:model_res}
        	The \textit{Volterra lattice}, also known as the \textit{discrete KdV equation}, describes the  motion of $2N$ particles on the line with equations 
	\begin{equation}   
 \label{Volterra}
	\frac{d }{d t} a_j \equiv	\dot{a_j} = a_j \left(a_{j+1} - a_{j-1} \right), \qquad j=1,\dots,2N.
	\end{equation}
	It was originally  introduced  by Volterra to study 
	population evolution in a hierarchical system of competing species. It was first solved by Kac and Van Moerbeke in \cite{Kac1975} using a discrete version of inverse scattering transform due to Flaschka \cite{Flashka1974b}. Equations \eqref{Volterra} can be considered as a finite-dimensional approximation of the Korteweg–de Vries equation.
	The phase space is $\mathbb{R}_+^{2N}$ and we consider  periodic boundary conditions $a_j=a_{j+2N}$ for all $j\in\Z$. The Volterra lattice  is a  reduction of the \emph{second flow of the Toda lattice} \cite{Kac1975}. Indeed, the latter is described by the dynamical system
	\begin{align}
		\dot{a_j} &= a_j \left(b_{j+1}^2 - b_{j}^2 + a_{j+1}-a_{j-1} \right),     &   &\qquad j=1,\dots,2N,    \\
		\dot{b_j} &= a_j(b_{j+1}+b_j) -a_{j-1}(b_j+b_{j-1}),       &   &\qquad j=1,\dots,2N,     \label{2ndToda}
	\end{align}
	and equations \eqref{Volterra} are recovered just by setting $b_j\equiv0$. The Hamiltonian structure of the equations follows from the one of the Toda lattice.  On the phase space $\R^{2N}_+$  we introduce the   Poisson bracket
	\begin{equation}
		\label{eq:poisson_volterra}
		\{ a_j, a_i \}_{\Volt} =  a_ja_i(\delta_{i,j+1} - \delta_{i,j-1})\,
	\end{equation}
	and the Hamiltonian  $H_1 = \sum_{j=1}^{2N} a_j\,$ so that the equations of motion \eqref{Volterra}  can be written in the Hamiltonian form 
	\begin{equation}
		\dot{a}_j = \{ a_j, H_1\}_{\Volt}\,.
		\label{eq:hamvoltN}
	\end{equation}
	An elementary constant of motion for the system is $H_0 = \prod_{j=1}^{2N} a_j$ which is  independent of  $H_1$.
	The Volterra lattice  is a completely integrable system, and  it admits several equivalent \emph{Lax representations}, see e.g. \cite{Kac1975,Moser75}. The classical one reads
	\begin{equation}  
		\dot{L}_1 = \left[A_1,L_1\right],
	\end{equation}
	where 
	\begin{equation}
		\begin{split}
			\label{eq:classic_Vlax}
			L_1 &= \sum_{j=1}^{2N} a_{j+1}E_{j+1,j}+E_{j,j+1},             \\
			A_1 &= \sum_{j=1}^{2N} (a_j + a_{j+1}) E_{j,j} + E_{j,j+2}\,,     
		\end{split}
	\end{equation}
	where we define the matrix $E_{r,s}$  as $\left(E_{r,s}\right)_{ij}=\delta^i_r \delta^j_s$ and   $E_{j+2N,i} =E_{j,i+2N} = E_{j,i}$.
	There exists also a \emph{symmetric} formulation due to Moser \cite{Moser75},
	\begin{equation}
		\label{Vlax}
		\begin{split}
			\dot{L}_2 & = \left[A_2,L_2\right]\\
			L_2 &= \sum_{j=1}^{2N} \sqrt{a}_j (E_{j,j+1}+E_{j+1,j})\,,             \\
			A_2 &= \frac{1}{2}\sum_{j=1}^{2N} \sqrt{a_ja_{j+1}} (E_{j,j+2}-E_{j+2,j})\,,     
		\end{split}
	\end{equation}
	which assumes that all $a_j > 0$.
	Furthermore, as it was noticed in \cite{Grava2023}, there exists also an \textit{antisymmetric} formulation for this Lax pair, indeed a straightforward computation yields 
	\begin{proposition} \label{AntiProp}
		Let $a_j > 0$ for all $j=1,\ldots,2N$. Then, the dynamical system \eqref{Volterra} admits an antisymmetric Lax matrix $L_3$ with companion matrix $A_3$, namely the equations of motion are equivalent to $\dot{L_3} = \left[A_3,L_3\right]$ with
		\begin{align}
			\label{LaxVolterra}
			L_3 &= \sum_{j=1}^{2N}\sqrt{ a_j} (E_{j,j+1}-E_{j+1,j}),             \\
			A_3 &= \frac{1}{2}\sum_{j=1}^{2N} \sqrt{a_ja_{j+1}} (E_{j+2,j}-E_{j,j+2}).      \label{AntiVlax}
		\end{align}
	\end{proposition}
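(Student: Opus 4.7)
The plan is a direct computation: differentiate $L_3$ using the equation of motion, then expand the commutator $[A_3,L_3]$ using the elementary product rule $E_{r,s}E_{p,q}=\delta_{s,p}E_{r,q}$, and verify that the resulting matrices agree. Because the claim is an elementary identity between $2N\times 2N$ matrices, there is no conceptual content beyond careful bookkeeping; the same style of calculation underlies the symmetric formulation \eqref{Vlax} already recorded by Moser.

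Concretely, I would first compute
\begin{equation*}
\dot{L_3}\;=\;\sum_{j=1}^{2N}\frac{\dot a_j}{2\sqrt{a_j}}\bigl(E_{j,j+1}-E_{j+1,j}\bigr)\;=\;\frac{1}{2}\sum_{j=1}^{2N}\sqrt{a_j}\,(a_{j+1}-a_{j-1})\bigl(E_{j,j+1}-E_{j+1,j}\bigr),
\end{equation*}
using the equation of motion \eqref{Volterra}. This is the target expression that $[A_3,L_3]$ must reproduce.

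Next I would expand $A_3L_3$ and $L_3A_3$ term by term. Each of the four products $E_{j\pm 2,j}E_{k,k\pm 1}$ collapses to a single delta constraint, so the double sum over $j,k$ becomes a single sum, producing matrix units of two types: \emph{adjacent} ones $E_{j,j+1}$ and $E_{j+1,j}$, and \emph{three-step} ones $E_{j,j+3}$ and $E_{j+3,j}$. After a suitable reindexing ($j\mapsto j+1$ in the appropriate terms), the three-step contributions in $A_3L_3$ and $L_3A_3$ carry the same coefficient $\tfrac12\sqrt{a_ja_{j+1}a_{j+2}}$ and therefore cancel in the commutator. The adjacent contributions, again after a single index shift, combine into coefficient $\tfrac12\sqrt{a_j}(a_{j+1}-a_{j-1})$ on $E_{j,j+1}$ and its negative on $E_{j+1,j}$, matching $\dot{L_3}$ exactly.

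The only thing to watch, and the place where an off-by-one error could creep in, is the treatment of periodic indices $E_{j+2N,i}=E_{j,i+2N}=E_{j,i}$ when shifting indices in the sums; but this is harmless because every sum runs over a full period. I do not expect any genuine obstacle, only the discipline of tracking the four delta constraints and performing the single reindexing that aligns the adjacent matrix units in $A_3L_3$ with those in $L_3A_3$.
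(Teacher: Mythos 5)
Your proposal is correct and follows exactly the route the paper intends: the paper offers no written proof beyond the remark that ``a straightforward computation yields'' the antisymmetric Lax pair, and your direct expansion of $[A_3,L_3]$ via $E_{r,s}E_{p,q}=\delta_{s,p}E_{r,q}$, with the three-step units $E_{j,j+3}$, $E_{j+3,j}$ cancelling and the adjacent units reproducing $\tfrac12\sqrt{a_j}(a_{j+1}-a_{j-1})$, is precisely that computation. No gap.
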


In view of the Lax representation $L_3 \equiv L$, we deduce that $\{Q^{[n]}\} _{n=1}^N= \{(-1)^n\trace{L^{2n}}\}_{n=1}^{N}$ are constants of motion for the system, or \textit{conserved field}, i.e. $\frac{\di }{\di t}\trace{L^{2n}} = 0$. We notice that for $k\in \N$ $\trace{L^{2k+1}}\equiv 0$ in view of the antisymmetric property of the matrix $L$, and that $2H_1 = Q^{[1]}$.
 
 Since also $H_0$ is conserved, we define

 \begin{equation}
     Q^{[0]} = \frac{1}{2}\ln(H_0)\,,
 \end{equation}
    and the \textit{local conserved fields} $Q^{[n]}_j$  $j=1,\ldots, N$ as

    \begin{equation}
    \label{eq:locally_conserved_fields}
        Q^{[n]}_j =(-1)^n L^{2n}_{j,j} \quad n=1,\ldots,N\,, \quad Q^{[0]}_j =\frac{1}{2} \ln(a_j)\,.
    \end{equation}

%
%

 To compute the correlation functions, we must consider the \textit{currents} related to the locally conserved field, specifically

 \begin{equation}
 \label{eq:evolution_local_fields}
     \frac{\di}{\di t} Q^{[n]}_j = \frac{(-1)^n}{2}\left(L^{2n}_{j,j+2}\sqrt{a_ja_{j+1}} - L^{2n}_{j,j-2}\sqrt{a_{j-1}a_{j-2}}\right)\,,  \quad n = 1,\ldots, N 
 \end{equation}
 thus defining

 \begin{equation}
 \label{eq:currents}
     J_j^{[n]} = \frac{(-1)^n}{2} \left(L^{2n}_{j,j+2}\sqrt{a_ja_{j+1}} +  L^{2n}_{j-1,j+1}\sqrt{a_{j-1}a_{j}}\right)\,,  \quad n = 1,\ldots, N 
 \end{equation}
 we can rewrite  \eqref{eq:evolution_local_fields} as

 \begin{equation}
     \frac{\di}{\di t} Q^{[n]}_j = J_j^{[n]} - J_{j-1}^{[n]}\,, \quad n = 1,\ldots, N 
 \end{equation}
 we notice that $ L^{2n}_{j-1,j+1}\sqrt{a_{j-1}a_{j}}$ is  a boundary term, that allows us to write \eqref{eq:evolution_local_fields} in a compact form.

 For $n=0$ we can define

 \begin{equation}
     J_j^{[0]} =\frac{1}{2}( a_{j+1} + a_j)\,,
 \end{equation}
 and we can cast the evolution for $Q_j^{[0]}$ as

 \begin{equation}
        \frac{\di}{\di t} Q^{[0]}_j = J_j^{[0]} - J_{j-1}^{[0]}\,.
 \end{equation}

 \begin{remark}
 \label{rem:trick_rem}
     We notice that 

     \begin{equation}
         J_j^{[0]} =  - \frac{1}{2}Q_{j+1}^{[1]}\,,
     \end{equation}
     thus also $J_j^{[0]}$ is a locally conserved field.
 \end{remark}

 Analogously to the conserved fields, we define the \textit{total current} $J^{[n]} = \sum_{j=1}^N J_j^{[n]}$ for $n=1,\ldots,N$.  
 
	\subsection{Generalized Gibbs Ensemble}
	
	We  introduce the generalized Gibbs ensemble for the Volterra lattice \eqref{Volterra} following \cite{Grava2023,Mazzuca2024CLT} as
	\begin{equation}
		\label{eq:VolterraGibbs}
		\di\mu_{\Volt}(\ba) = \frac{1}{Z_N^{\Volt,1}(\beta,V)}\prod_{j=1}^{2N} a_j^{\beta/2 - 1}\mathbbm{1}_{a_j>0} e^{\trace{V(L(\ba)}} \di  \ba  ,\quad \beta>0,
	\end{equation}
	where $V\, :\,\R \to \R$ is a polynomial of the form $V(x) = (-1)^{\ell+1} c_{\ell}x^{2\ell} + \sum_{j=1}^{\ell-1} c_j x^{2j}$, $\ell\geq 1$, $c_{\ell} > 0$, and 
	\begin{equation}
		Z_N^{\Volt,1}(\beta,V) = \int_{\R^{2N}_+}\prod_{j=1}^{2N} a_j^{\beta/2 - 1}\mathbbm{1}_{a_j>0}e^{\trace{V(L)}}\di \ba <\infty\,.
		\label{eq:ZNvolterravalue}
	\end{equation}
	
	We recover the standard Gibbs ensemble setting $V(x) = x^2/2$, in this case the variables $a_j$ are independent and identically distributed according to a random variables with  probability density function $f_\beta(x)$
	
	\begin{equation}
		f_\beta(x) = \frac{x^{\beta/2-1} e^{-x}}{\Gamma(\beta/2)} \,,
	\end{equation}
	
	which is just a scaled $\chi^2$ distribution with parameter $\beta$.
In this case, the partition function can be computed explicitly:

\begin{equation}
	Z_N^{\Volt,1}(\beta, x^2/2) = \Gamma(\beta/2)^{2N}\,.
\end{equation}
{\color{black}
\begin{remark}
We recall that one can express the conserved fields as  $Q^{[k]} = (-1)^k\trace{L^{2k}}$ for $k=1,\ldots,N$ and that $Q^{[0]} = \frac{1}{2}\sum_{j=1}^{2N}\ln(a_j)$; therefore, one can rewrite the previous GGE in the form

\[ \di\mu_{\Volt}(\ba) = \frac{1}{Z_N^{\Volt,1}(\beta,V)} \exp\left( \left(\beta -2\right)Q^{[0]} -c_\ell Q^{[\ell]} + \sum_{j=1}^{\ell-1} (-1)^jc_j Q^{[j]}\right) \di\ba.\]

We stress that the $-2$ in the previous expression comes from the volume form related to the Poisson brackets \eqref{eq:poisson_volterra} 

\end{remark}}
	For future computations, it is useful to represent the previous expressions in terms of the variables $\{x_j\}_{j=1}^{2N}$ defined as $x_j^2 = a_j$, such that $x_j\in \R_+,\, j=1, \dots 2N$. In this new set of variables, we can express the Gibbs measure \eqref{eq:VolterraGibbs} and its normalization \eqref{eq:ZNvolterravalue} as  
	
	\begin{equation}
		\label{eq:quadratic_Gibbs}
		\begin{split}
			&\di\mu_{\Volt}(\bx) = \frac{1}{Z_N^{\Volt,2}(\beta,V)}\prod_{j=1}^{2N} x_j^{\beta - 1}\mathbbm{1}_{x_j>0} e^{\trace{V(L(\bx^2))}} \di  \bx  \\
			& Z_N^{\Volt,2}(\beta,V) = \int_{\R^{2N}_+}\prod_{j=1}^{2N} x_j^{\beta - 1}\mathbbm{1}_{x_j>0}e^{\trace{V(L(\bx^2))}}\di \bx <\infty
		\end{split},\quad \beta>0,
	\end{equation}
where we defined $\bx^2 = (x_1^2. \ldots,x_{2N}^2)$. Furthermore, we notice that in the case $V(x) = \frac{x^2}{2}$, the random variables $x_j$ are distributed as $2N$ independent  $\chi$-distributions, i.e. their probability density function is of the form  $g_\beta(x)$

\begin{equation}
	g_\beta(x) = \frac{x^{\beta - 1}e^{-x^2}}{2^{-1}\Gamma\left(\frac{\beta}{2}\right)}\,.
\end{equation}
In this case is possible to compute the partition function as

\begin{equation}
\label{eq:partition_easy_volterra}
	Z_N^{\Volt,2}(x^2/2,\beta) = 2^{-2N}\Gamma\left(\frac{\beta}{2}\right)^{2N}\,.
\end{equation}

In this new coordinates, the Lax matrix $L$ \eqref{LaxVolterra} reads as

\begin{equation}
	L= \sum_{j=1}^{2N}x_j (E_{j,j+1}-E_{j+1,j}),  
\end{equation}

The \textit{main analytic result} of this paper is the explicit computation of the susceptibility matrix $C\in \textrm{Mat}(\R,N+1)$ and the charge-current static correlation matrix $B\in\textrm{Mat}(\R,N+1)$ in terms of the \textit{density of states} $\sigma_{\beta,V}$ of the Volterra lattice endowed with the probability distribution $\di \mu_\Volt$ \eqref{eq:VolterraGibbs}. The matrices $C,B$ are defined as 

\begin{equation}
	\label{eq:defCB}
	C_{m,n} = \lim_{N\to\infty}\frac{1}{2N}\cov{Q^{[n]}}{Q^{[m]}} \,,\quad B_{n,m} = \lim_{N\to\infty}\frac{1}{2N}\cov{Q^{[n]}}{J^{[m]}}\,,\quad m,n=0,\ldots,N\,,
\end{equation} 
where the covariance $$\cov{f}{g} = \meanvalue{fg}{1}- \meanvalue{f}{1}\meanvalue{g}{1}\,,$$ $\meanvalue{\cdot}{1}$ is the expected value with respect to the GGE \eqref{eq:VolterraGibbs} and we adopt the convention that if the quantity is evaluated at time $t=0$, we omit the time dependence. We notice that in \cite{Mazzuca2024CLT}, the authors showed how to compute the correlation matrix $C$ \eqref{eq:defCB} for the Volterra lattice in terms of the Free energy \eqref{eq:free_energy_volterra}. 

{\color{black} To get an explicit expression of $C,B$, we must obtain the \textit{density of states} of the Volterra lattice.} The density of state $\sigma_{\beta,V}$ is the probability distribution on $\R$ defined as the weak limit of 

\begin{equation}
	\lim_{N\to\infty}\sum_{j=1}^{2N} \delta_{w_j}(x) \rightharpoonup \sigma_{\beta,V}\,,
\end{equation}
where $-iw_j$ are the  eigenvalues of the Lax matrix $L$ \eqref{LaxVolterra}, and $\delta_y(x)$ is the delta function centered at $y$.  

{\color{black} Given these definitions, we can state the main analytical result of our paper:}

\begin{theorem}
	\label{thm:main}
	Consider the Lax matrix $L$ \eqref{LaxVolterra} endowed with the GGE \eqref{eq:VolterraGibbs}. Define the susceptibility matrix $C$ and the charge-current correlation matrix $B$ as in \eqref{eq:defCB}. Then,
	
	\begin{equation}
		\begin{split}
				& C_{0,0} =  \frac{\kappa^2}{2}\jap{\sigma_{\beta,V}(\dr{1})^2}\,, \\
				& C_{0,n} =C_{n,0}=   \kappa \jap{\sigma_{\beta,V}\dr{1}\left(\dr{w^{2n}} -q_n\dr{1}\right)} \,, \quad n= 1, \ldots,N\\
				&  C_{m,n} =   2 \jap{\sigma_{\beta,V}\left(\dr{{w^{2m}}} - q_m[1]^{dr}\right) \left(\dr{{w^{2n}}} - q_n[1]^{dr}\right)} \,\quad m,n = 1,\ldots,N\,,\\
		& B_{0,n} =B_{n,0}=  -\frac{1}{2} C_{n,1}\,, \quad n= 0, \ldots,N\\
		&  B_{m,n}  =  -\frac{2}{\kappa}\jap{\sigma_{\beta,V} (v_{\textrm{eff}} - q_1) \dr{{w^{2m}} -q_m} \dr{{w^{2n}} - q_n}}\,
	\end{split}
\end{equation}

Here $\jap{\phi} = \int_\R \phi(x)\di x$, 
$\sigma_{\beta,V} = \partial_{\beta}(\beta \rho_{\beta,V})$, where the derivative is understood in week sense, and  
$\rho_{\beta,V}$ is the minimizer of the following functional 

	\begin{equation}
	\label{eq:functional_intro}
	\begin{split}
		\cF(\beta,V)[\rho] &		= -\frac{\beta}{2} \int\int_{\R_+^2}\ln(\vert x^2-y^2 \vert)  \rho(x) \rho(y)\di x \di y - \int_{\R_+}\left(V(ix)  + V(-ix)  - \ln(x)\right) \rho(x)\di x \\ & +\int_{\R_+}\ln( \rho(x)) \rho(x)\di x \,,
	\end{split}
\end{equation}
here, $V(x)$ is a polynomial of the form $V(x) = (-1)^{\ell+1} c_{\ell}x^{2\ell} + \sum_{j=1}^{\ell-1} c_j x^{2j}$, $\ell\geq 1$, $c_{\ell} > 0$.
The dressing operator $\dr{\psi}$ is defined as

\begin{equation}
	\label{eq:dress_T_def}
	\dr{\psi} =  (1-\beta T\rho_{\beta,V})^{-1}\psi, \quad  T\psi(w) = \int_{\R}\ln(\vert w^2-z^2\vert )\psi(z)\di z\,,\quad w\in \R\,.
\end{equation}

$q_m$ is the $2m^{th}$ moment of $\sigma_{\beta,V}$, i.e. $q_m = \int_{\R_+} \sigma_{\beta,V}(w)w^{2m}\di w $, $\ve= \frac{\dr{w^2}}{\dr{1}}$ and 

\begin{equation}
	\kappa = \partial_\beta 2\cF_{\textrm{Volt}}(\beta,V)\,,  \quad \cF_{\textrm{Volt}}(\beta,V) = -\lim_{N\to \infty} \frac{1}{2N} \ln\left(Z_N^{\Volt}(\beta,V)\right)\,.
\end{equation}
\end{theorem}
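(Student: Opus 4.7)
The plan is to treat the susceptibility matrix $C$ and the charge-current matrix $B$ separately, relying on the machinery of \cite{Mazzuca2024CLT} and on the correspondence with the Antisymmetric Gaussian $\beta$-ensemble developed in Section \ref{sec:antisym}. For $C$, the starting point is the observation that when $V(x) = (-1)^{\ell+1} c_\ell x^{2\ell} + \sum_{j<\ell} c_j x^{2j}$, the partition function $Z_N^{\Volt,1}(\beta, V)$ depends smoothly on $\beta$ and on the coefficients $c_j$. The remark after \eqref{eq:partition_easy_volterra} shows that the GGE weight can be written as an exponential of a linear combination of conserved fields $Q^{[k]}$, so each covariance $\cov{Q^{[m]}}{Q^{[n]}}$ is a second-order logarithmic derivative of $Z_N^{\Volt,1}$ with respect to $\beta$ and the $c_j$'s. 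Passing to the thermodynamic limit, these logarithmic derivatives converge to second derivatives of the free energy $\cF_{\Volt}(\beta,V)$, exactly along the lines of \cite{Mazzuca2024CLT}. The core step is then to differentiate the Euler-Lagrange equation of the variational problem \eqref{eq:functional_intro} satisfied by $\rho_{\beta,V}$, which produces the resolvent $(1-\beta T\rho_{\beta,V})^{-1}$ and hence the dressing operator $\dr{\cdot}$. Combining these ingredients yields the stated formulas for $C_{m,n}$, with $\sigma_{\beta,V} = \partial_\beta(\beta \rho_{\beta,V})$ arising as the natural weight from the simultaneous differentiation in $\beta$ and in the potential parameters.

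For the charge-current matrix $B$, the entries $B_{0,n}$ follow immediately from Remark \ref{rem:trick_rem}: since $J_j^{[0]} = -\tfrac{1}{2} Q_{j+1}^{[1]}$, summing over $j$ gives $J^{[0]} = -\tfrac{1}{2} Q^{[1]}$ up to boundary terms that are subleading in the $\tfrac{1}{2N}$ rescaling, whence $B_{0,n} = -\tfrac{1}{2} C_{n,1}$. The nontrivial entries are $B_{m,n}$ for $m,n\geq 1$, where the currents involve off-diagonal matrix elements $L^{2m}_{j,j+2}\sqrt{a_j a_{j+1}}$. My plan here is to use an integration by parts in the $a_j$ variables against the GGE density, coupled with the Lax structure and the local conservation law $\partial_t Q^{[n]}_j = J^{[n]}_j - J^{[n]}_{j-1}$, to relate $\meanvalue{Q^{[n]} J^{[m]}}{1}$ to a derivative of a suitably perturbed free energy. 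Concretely, perturbing the Hamiltonian by a position-dependent coefficient and computing the linear response of the density profile should reproduce the standard GHD identity $B_{m,n} = \sum_k C_{m,k} V^{\text{eff}}_{k,n}$; after identifying $\ve = \dr{w^2}/\dr{1}$ via the dressing formalism, this collapses to the compact form in the statement, with the factor $(\ve - q_1)$ reflecting the subtraction of the mean propagation velocity $q_1$ of the ensemble.

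The main obstacle I foresee is precisely the derivation of the $B_{m,n}$ formula for $m,n\geq 1$: unlike $C$, these are not immediate second derivatives of a static free energy, the currents are genuinely quadratic in $L$, and the boundary terms $L^{2m}_{j-1,j+1}\sqrt{a_{j-1} a_j}$ in the symmetrized definition of $J_j^{[m]}$ must be controlled carefully to avoid spurious contributions in the thermodynamic limit. The cleanest route I would attempt is to mimic Spohn's strategy for the Toda lattice, exploiting the GHD identity $B = C V^{\text{eff}}$ together with an independent computation of $V^{\text{eff}}$ through the dressing operator; the antisymmetric spectrum $\pm i w_j$ of $L$ should force the odd powers to drop out, leaving exactly the symmetric pairing $\dr{w^{2m} - q_m}\dr{w^{2n} - q_n}$ integrated against $\sigma_{\beta,V}(\ve - q_1)$, as claimed.
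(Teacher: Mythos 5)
Your treatment of the susceptibility matrix $C$ follows essentially the same route as the paper: the covariances of the conserved fields are second derivatives of the free energy (Corollary \ref{cor:mean_cor_gibbs}), the free energy is identified with the value of the variational problem through the antisymmetric Gaussian $\beta$-ensemble (Proposition \ref{cor:mean_cor_anti} and Theorem \ref{thm:LDP_antisym}), and the dressing operator appears by differentiating the Euler--Lagrange equation in $\beta$ and in the coefficients of the potential. Likewise $B_{0,n}=-\tfrac12 C_{n,1}$ via Remark \ref{rem:trick_rem} is exactly the paper's argument.

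The genuine gap is in $B_{m,n}$ for $m,n\geq 1$, and you have correctly located it but not closed it. Invoking the GHD identity $B=CV^{\mathrm{eff}}$ with $\ve=\dr{w^2}/\dr{1}$ is circular here: that identity (the collision rate ansatz) is precisely what the theorem establishes for the Volterra lattice, and the paper only obtains it as a corollary (Remark \ref{rem:collision}) \emph{after} computing $B$; there is no independent a priori computation of $V^{\mathrm{eff}}$ to lean on. The missing ingredient is Lemma \ref{lem:mean_currents}, which expresses $\lim_{N}\frac{1}{2N}\meanvalue{J^{[n]}}{1}$ as $-\frac12\int_0^\beta \partial_{t_1}\partial_{t_2}\cF_{\textrm{Volt}}(y,\cdot)\,\di y$. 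Its proof is not an integration by parts in the $a_j$'s but a specific three-step chain: (i) $\partial_\beta\meanvalue{J_1^{[n]}}{1}=\cov{J_1^{[n]}}{Q^{[0]}}$, because $\beta$ couples linearly to $Q^{[0]}$ in the GGE; (ii) stationarity in time together with translation invariance and the exponential decay of correlations (Corollary \ref{cor:decay}) yield the exchange identity $\cov{J_1^{[n]}}{Q_j^{[0]}}=\cov{Q_1^{[n]}}{J^{[0]}_{2N-j+2}}$ up to a $j$-independent constant that must vanish; (iii) the special identity $J^{[0]}_j=-\tfrac12 Q^{[1]}_{j+1}$ turns the result into a charge--charge covariance, i.e.\ a second derivative of the free energy, which is then integrated back from $\beta=0$ where the current average vanishes. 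Without this (or an equivalent exact evaluation of the average currents), the formula for $B_{m,n}$ --- and hence the emergence of $(\ve-q_1)$ --- is not actually derived. Once the lemma is in hand, the remaining step $B_{n,m}=-i\partial_{t_n}\meanvalue{J_0^{[m]}}{1}$ does go through with the dressing identities you describe.
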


\begin{remark}
	From the explicit expression of $C,B$ the two matrices are symmetric, this is obvious  for $C$ due to its structure, but it is not for $B$. 
\end{remark}

The explicit computation of these two matrices allows us to apply the theory of Generalized Hydrodynamics and deduce the behavior of the space-time correlation functions at the Euler scale. Specifically, we argue that defining 

\begin{equation}
	S^{N}_{m,n}(j,t) = \lim_{N\to \infty} \cov{Q^{[m]}_j(t)}{Q^{[n]}_0(0)} \,,
\end{equation} 
its approximation at the Euler scale is 

\begin{equation}
	\lim_{N\to \infty} S^{N}_{m,n}(j,t) \stackrel{x= \frac{j}{2N}}{\simeq} \tS_{m,n}(x,t)\,,  
\end{equation}

where

\begin{equation}
	\label{eq:correlation}
	\tS_{m,n}(x,t) = \begin{cases}
		 \vspace{10pt}
		 \frac{\kappa^2}{2}\jap{\sigma_{\beta,V} \delta(x+t(\ve - q_1)(\kappa)^{-1})\left((\dr{1})^2  \right)}  \quad m,n = 0\,, \\
		 \vspace{10pt}
		  \langle \kappa \sigma_{\beta,V} \delta(x+t(\ve - q_1)(\kappa)^{-1})\dr{1} \Xi[w^n]\rangle \quad m=0\,, \\
		  \vspace{10pt} 
		 \langle \kappa \sigma_{\beta,V} \delta(x+t(\ve - q_1)(\kappa)^{-1})\dr{1} \Xi[w^m]\rangle \quad n=0\,, \\ 
		 	 2\langle\Xi[w^m]\Xi[w^n]\sigma_{\beta,V}  \delta(x+t(\ve - q_1)(\kappa)^{-1}) \rangle \quad m,n\neq 0\,,
	\end{cases} \,. 
\end{equation}

and $\Xi \phi = \dr{\phi - \jap{\sigma_{\beta,V} \phi}}$.
{\color{black}
\begin{remark}
We notice that in \cite{Doyon_large_scale} the Euler scale correlation functions have a Dirac delta function of the form $\delta(x-\tv_\text{eff} t)$, one can recover such expression by the change of coordinates $ \tv_\text{eff} = - (\ve - q_1)(\kappa)^{-1}$. Indeed, the two effective velocities are linearly related.
\end{remark}}

{\color{black} As we already mentioned}, the function $\tS_{m,n}(x,t)$ \textbf{is not continuous} for all $(x,t)$; the two main reasons are that the density $\sigma_{\beta,V}$ has support just on the positive real axis, {\color{black} and that the effective velocity $\ve(w)$ has a minimum at $0$, therefore it is not one-to-one form $\R_+$ to $\R$.} Intrigued by this fact, we performed several numerical investigation to compare the numerical correlation functions and the prediction obtained from the linearized Generalized Hydrodynamic (GHD). We extensively analyze them in the last section of our paper, here we summarize our findings

\begin{itemize}
	\item The GHD correctly predict the \textbf{ballistic scaling } of the correlation functions; i.e.
	 $$\lim_{N\to \infty} S^{N}_{m,n}(j,t) \sim \frac{1}{t}f\left( \frac{x-\tv t}{\tc t}\right)$$ for some function $f$ and constant  $\tv,\tc$. 
	 \item The approximation $\tS_{m,n}(x,t)$ is not continuous for $\frac{x}{t} = \xi_0 = \frac{ q_1 - \ve(0) }{\kappa}$
	 \item In a space-time neighbor of the points $(x,t)$ such that  $\xi_0 = \frac{x}{t}$, i.e.  where $\tS_{m,n}(x,t)$ is not continuous, the numerical correlation functions display an \textbf{highly oscillatory behavior}.
\end{itemize}

The combination of these facts lead us to believe that, in order to obtain a more accurate prediction, one has to consider also some diffusive effect as described in \cite{DeNardis2018}. Specifically, we believe that at the point $\xi_0$ the diffusive effects are not a sub-leading correction to the transport dynamics, but they are of the same order. 
We notice, that this is not the first time that such effect has been noticed, see \cite{Piroli2017,Misguich2017,Ljubotina2017,Shock1}. Nevertheless, up to our knowledge, this is the first time that such behavior is present in a classical nonlinear integrable chain at equilibrium. {\color{black} As we already mentioned, in \cite{Whitham1} the authors proposed a new approach to study this shock formation through Whitham modulation theory; it is possible that such methodology could also be applied in this context.}

{\color{black}
\subsection{Idea of the proof}

The proof follows the same approach developed by Spohn in the series of papers \cite{Spohn2019a,Spohn2020,spohn2021hydrodynamic,spohn2021hydrodynamicToda}. The first step is to express both the average conserved fields and the average currents in terms of the free energy of the model. To do that, we use the results in \cite{Mazzuca2024CLT}. Then, we must express these quantities in terms of the density of states of the model, and we have to characterize it. We do this by linking the Generalized Gibbs ensemble of the Volterra lattice with the classical Antisymmetric beta ensemble at high temperature \cite{Dumitriu_Forrester,Mazzuca2021}. 
Using this connection, we are able to compute explicitly both the susceptibility matrix $C$ and the charge-current correlation matrix $B$. 
Given these two matrices, it is possible to apply GHD to obtain an approximation of the correlation functions of the model at hand.  
}

\section{Theoretical  Framework}
\label{sec:TheoBack}

In this section, we recall several known results that we use to prove Theorem \ref{thm:main}. In particular, we use the results in \cite{Mazzuca2024CLT,Guionnet2022}.
 
\subsection{Average Conserved fields}

        In \cite{Mazzuca2024CLT}, the authors were able to compute the susceptibility matrix $C$ \eqref{eq:defCB} in terms of the \textit{ free energy} of the model, which is defined as 
        
        \begin{equation}
        	\label{eq:free_energy_volterra}
        	\cF_{\textrm{Volt}}(\beta,V) = -\lim_{N\to \infty} \frac{1}{2N} \ln\left(Z_N^{\Volt,2}(\beta,V)\right)\,.
        \end{equation}
         Specifically, they were able to prove the  following
        
    \begin{corollary}[cf. \cite{Mazzuca2024CLT}, Corollary 3.13]   
    \label{cor:mean_cor_gibbs}
    Consider $Q_j^{[n]}$ \eqref{eq:locally_conserved_fields}, the Generalized Gibbs ensemble $\di\mu_{\Volt}(\ba)$ \eqref{eq:VolterraGibbs}, and the free energy $\cF_{\textrm{Volt}}(\beta,V)$ \eqref{eq:free_energy_volterra}. For any fixed $n,m\in \mathbb{N}$ the following holds true

    \begin{equation}
    \begin{split}
         &\lim_{N\to\infty}\frac{1}{2N}\meanvalue{Q^{[n]}}{1} = -i\partial_t \cF_{\textrm{Volt}}(\beta,V+ (-1)^{n+1} itx^{2n})_{\vert_{t=0}}\,, \\
         &\lim_{N\to\infty}\frac{1}{2N}\meanvalue{Q^{[0]}}{1} = -\partial_\beta \cF_{\textrm{Volt}}(\beta,V)\, \\ 
         &\lim_{N\to\infty}\frac{1}{2N}\cov{Q^{[n]}}{Q^{[m]}} = \partial_{t_1}\partial_{t_2} \cF_{\textrm{Volt}}(\beta,V+  (-1)^{n+1}it_1x^{2n}+  (-1)^{m+1} it_2 x^{2m})_{\vert_{t_1=t_2=0}}\,,  \\
         &\lim_{N\to\infty}\frac{1}{2N}\cov{Q^{[n]}}{Q^{[0]}} = -i\partial_{t}\partial_{\beta} \cF_{\textrm{Volt}}(\beta,V + (-1)^{n+1} itx^{2n})_{\vert_{t=0}}\,, \\ 
         &\lim_{N\to\infty}\frac{1}{2N}\cov{Q^{[0]}}{Q^{[0]}} = -\partial^2_{\beta} \cF_{\textrm{Volt}}(\beta,V)\,, 
    \end{split}
    \end{equation}

    where the expected value is taken with respect to the Generalized Gibbs ensemble  $\di\mu_{\Volt}(\ba)$ \eqref{eq:VolterraGibbs}.
    \end{corollary}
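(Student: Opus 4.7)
The plan is to apply the standard cumulant-generating-function trick to the partition function $Z_N^{\Volt,2}(\beta,V)$. The key observation is that perturbing $V$ or $\beta$ in $Z_N^{\Volt,2}$ inserts precisely a factor of $e^{-itQ^{[n]}}$ or a factor involving $Q^{[0]}$ into the integrand; the five identities in the statement are then obtained by differentiating $-\tfrac{1}{2N}\log Z_N^{\Volt,2}$ once or twice and passing to the limit.

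Concretely, set $V_t = V + (-1)^{n+1} it x^{2n}$. Since $Q^{[n]} = (-1)^n\trace{L^{2n}}$, the extra term in the exponent is
\[
\trace{(-1)^{n+1} it\, L^{2n}} \;=\; -it\, Q^{[n]},
\]
and hence at every finite $N$ one has the exact factorization
\[
Z_N^{\Volt,2}(\beta,V_t) \;=\; Z_N^{\Volt,2}(\beta,V)\,\meanvalue{e^{-itQ^{[n]}}}{1}.
\]
Taking $-\tfrac{1}{2N}$ of the logarithm and differentiating once in $t$ at $t=0$, the first factor drops out and the second produces $\tfrac{1}{2N}\meanvalue{Q^{[n]}}{1}$ after multiplication by $-i$, giving the first identity. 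A second mixed derivative on the two-parameter family $V_{t_1,t_2} = V + (-1)^{n+1}it_1 x^{2n} + (-1)^{m+1}it_2 x^{2m}$ yields the standard covariance identity
\[
\partial_{t_1}\partial_{t_2}\log\meanvalue{e^{-it_1 Q^{[n]} - it_2 Q^{[m]}}}{1}\Big|_{0} \;=\; -\cov{Q^{[n]}}{Q^{[m]}},
\]
and the overall sign agrees with the $-\tfrac{1}{2N}\log$ in the definition of $\cF_{\textrm{Volt}}$.

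For the three identities that involve $Q^{[0]}$, I would instead use $\partial_\beta \prod_j x_j^{\beta-1} = \bigl(\sum_j \log x_j\bigr)\prod_j x_j^{\beta-1}$ together with $\sum_j \log x_j = \tfrac12\sum_j\log a_j = Q^{[0]}$; hence $\partial_\beta Z_N^{\Volt,2}(\beta,V) = Z_N^{\Volt,2}(\beta,V)\,\meanvalue{Q^{[0]}}{1}$, and the computation proceeds exactly as above, now with $\partial_\beta$ in place of (or alongside) $-i\partial_t$. The two mixed second-order identities are obtained by combining the two types of perturbation.

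The real content, and the only genuine obstacle, is the interchange of the limit $N\to\infty$ with $\partial_t$ and $\partial_\beta$. This is precisely what the results of \cite{Mazzuca2024CLT} provide: the free energy exists and depends analytically on the coefficients of $V$ and on $\beta$ in a complex neighborhood of the admissible real parameters, with the derivatives of $-\tfrac{1}{2N}\log Z_N^{\Volt,2}$ converging uniformly on compacts. Once this analyticity is in hand, the finite-$N$ identities derived above pass to the limit termwise, which yields the corollary.
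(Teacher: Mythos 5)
Your derivation is correct, and in fact the paper offers no proof of this statement at all: it is imported (in reformulated notation) from \cite{Mazzuca2024CLT}, Corollary 3.13, so there is no internal argument to compare against. What you wrote is the standard generating-function computation that underlies the cited result: the sign bookkeeping checks out (the perturbation $V\mapsto V+(-1)^{n+1}itx^{2n}$ contributes $\trace{(-1)^{n+1}itL^{2n}}=-itQ^{[n]}$ to the exponent, and $\partial_\beta$ of the weight $\prod_j x_j^{\beta-1}$ inserts $\sum_j\ln x_j=Q^{[0]}$), and the constant ratio $Z_N^{\Volt,1}=2^{2N}Z_N^{\Volt,2}$ is harmless since it drops out of every derivative. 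You also correctly identify that the only genuine mathematical content is the interchange of $\lim_{N\to\infty}$ with $\partial_t$ and $\partial_\beta$ — equivalently, that the finite-$N$ free energies and their first two derivatives converge — which is exactly what \cite{Mazzuca2024CLT} supplies and what the paper implicitly relies on by citing that corollary. The one caveat worth flagging is that the perturbation $-itQ^{[n]}$ is purely oscillatory for real $t$, so invoking analyticity in a complex $t$-neighborhood requires integrability of $e^{|\mathrm{Im}\,t|\,Q^{[n]}}$ against the GGE, which is delicate when $2n$ exceeds the degree of $V$; the safer phrasing is differentiability in real $t$ plus the convergence of cumulants established in the reference, rather than a blanket appeal to Vitali-type arguments.
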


    We notice that the result in \cite{Mazzuca2024CLT} is not stated in this way, but this form is more suitable for our analysis.

        \subsection{Currents}

        To continue our analysis, we have to compute the average of the currents. This is usually a difficult task since we do not have a clear connection between the currents averages and some matrix model or the Gibbs ensemble, as in the case of the local conserved fields. Surprisingly, in this case,  as it happened for the Toda lattice , we can compute explicitly these quantities by applying the same idea used in \cite{Spohn2020}, and formalized in \cite{Mazzuca2024CLT}. Specifically, we are able to prove the following:
        
         \begin{lemma}
    \label{lem:mean_currents}
            Consider the Volterra lattice \eqref{Volterra} endowed with the GGE \eqref{eq:VolterraGibbs}, and define the currents $J^{[n]}$ as in \eqref{eq:currents}, then for all fixed $n\in\N$

            \begin{equation}
            	\label{eq:local_current}
                \lim_{N\to\infty} \frac{1}{2N}\meanvalue{J^{[n]}}{1} = -\frac{1}{2}\int_0^\beta \partial_{t_1}\partial_{t_2} \cF_{\textrm{Volt}}(y, V + it_1x^2 + (-1)^{n+1} it_2x^{2n})\di y\,,
            \end{equation}
            where $\cF_{\textrm{Volt}}(\beta,V)$ is the free energy \eqref{eq:free_energy_volterra}.
        \end{lemma}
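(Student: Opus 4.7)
The plan is to express the total mean current as a thermodynamic integral of covariances of conserved fields, following Spohn's argument for the Toda lattice \cite{Spohn2020} and its formalization in \cite{Mazzuca2024CLT}. For fixed $N$, polynomial potential $V$, and index $n$, let $\mathbb{E}_y$ and $\mathrm{Cov}_y$ denote expectation and covariance with respect to the GGE \eqref{eq:VolterraGibbs} at inverse temperature $y$ (keeping $V$ fixed), and set
\[
F_N(y) := \frac{1}{2N}\mathbb{E}_y[J^{[n]}], \qquad y>0.
\]
The aim is to compute $F_N(\beta)$ by solving an ODE in $y$ and then identifying the resulting integrand with a mixed $t_1,t_2$-derivative of $\cF_{\Volt}$ via Corollary \ref{cor:mean_cor_gibbs}.

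First I would differentiate $F_N$ in $y$. Rewriting the GGE as in the Remark following \eqref{eq:VolterraGibbs}, the only $y$-dependence of the log-density is through the term $(y-2)Q^{[0]}$, and therefore
\[
F_N'(y) = \frac{1}{2N}\mathrm{Cov}_y(J^{[n]}, Q^{[0]}).
\]
The crucial identity to establish is
\[
\mathrm{Cov}_y(J^{[n]}, Q^{[0]}) = -\tfrac{1}{2}\,\mathrm{Cov}_y(Q^{[1]}, Q^{[n]}),
\]
which combines two inputs: Remark \ref{rem:trick_rem} (together with periodic boundary conditions) gives the identification $J^{[0]} = -\tfrac{1}{2} Q^{[1]}$ between total conserved quantities, and the Onsager-type symmetry of the charge--current correlations at equilibrium,
\[
\mathrm{Cov}_y(J^{[n]}, Q^{[m]}) = \mathrm{Cov}_y(J^{[m]}, Q^{[n]}),
\]
applied at $m=0$, allows one to exchange the roles of $J^{[n]}$ and $J^{[0]}$. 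Granting this, $F_N'(y) = -\tfrac{1}{2}(2N)^{-1}\mathrm{Cov}_y(Q^{[1]}, Q^{[n]})$. After checking that $F_N(y)\to 0$ as $y\to 0^+$ (at least after the $N\to\infty$ limit has been taken), integrating from $0$ to $\beta$ yields
\[
F_N(\beta) = -\tfrac{1}{2}\int_0^\beta \tfrac{1}{2N}\mathrm{Cov}_y(Q^{[1]}, Q^{[n]})\,\di y,
\]
and passing to $N\to\infty$ together with Corollary \ref{cor:mean_cor_gibbs} produces \eqref{eq:local_current}.

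The main obstacle is the Onsager symmetry $\mathrm{Cov}(J^{[n]}, Q^{[m]}) = \mathrm{Cov}(J^{[m]}, Q^{[n]})$. One route is via time-reversal: under $a_j\mapsto a_{-j}$ the Volterra flow \eqref{Volterra} reverses while the GGE is preserved, and combining this with the translation-averaged definition of $B$ in \eqref{eq:defCB} forces the equality in the $N\to\infty$ limit. A more robust alternative, presumably the one adopted in \cite{Mazzuca2024CLT}, is to prove the target identity $\mathrm{Cov}_y(J^{[n]}, Q^{[0]}) = -\tfrac{1}{2}\mathrm{Cov}_y(Q^{[1]}, Q^{[n]})$ directly via integration by parts in the GGE: writing $Q^{[0]} = \tfrac{1}{2}\sum_j \ln a_j$ and integrating against $a_j\partial_{a_j}$ transfers the logarithm onto $J^{[n]}$ and onto $e^{\trace{V(L)}}$; using the algebraic form of $J_j^{[n]}$ in \eqref{eq:currents} together with $J^{[0]}_j = -\tfrac{1}{2}Q^{[1]}_{j+1}$ from Remark \ref{rem:trick_rem}, a careful bookkeeping collapses the bulk term to the desired expression. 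Boundary contributions at $a_j\to 0^+$ and $a_j\to\infty$ are controlled respectively by $y>0$ (integrability of $a_j^{y/2-1}$ at the origin) and the coercivity of $V$. A final technicality is the exchange of the $N\to\infty$ limit with the $y$-integration, which should follow from uniform-in-$N$ bounds on the free-energy derivatives developed in \cite{Mazzuca2024CLT}.
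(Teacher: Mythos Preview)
Your skeleton is exactly the paper's: differentiate the mean current in $\beta$ to get a charge--current covariance, convert it via the identity $J^{[0]}=-\tfrac12 Q^{[1]}$ and an Onsager-type symmetry into a charge--charge covariance, note the vanishing boundary value at $\beta\to 0$, integrate back, and identify the integrand through Corollary~\ref{cor:mean_cor_gibbs}. The only substantive difference is how the symmetry $\mathrm{Cov}(J^{[n]},Q^{[m]})=\mathrm{Cov}(J^{[m]},Q^{[n]})$ is obtained. The paper uses neither spatial reflection/time-reversal nor integration by parts. Instead it runs the continuity equation and space-time stationarity: from $\partial_t Q_j^{[n]}=J_j^{[n]}-J_{j-1}^{[n]}$ one has $\mathrm{Cov}(J_{j-1}^{[n]}-J_j^{[n]},Q_1^{[m]})=-\tfrac{d}{dt}\mathrm{Cov}(Q_j^{[n]}(t),Q_1^{[m]}(0))\big|_{t=0}$; invariance of the GGE under the flow moves the $t$ to the second slot, cyclic invariance shifts in space, and a second use of the continuity equation yields
\[
\partial_j\bigl(\mathrm{Cov}(Q_1^{[n]},J_{2N-j+2}^{[m]})-\mathrm{Cov}(Q_1^{[m]},J_j^{[n]})\bigr)=0,
\]
so the bracket is $j$-independent. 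The key extra input you did not invoke is the exponential decay of correlations for local observables (Corollary~\ref{cor:decay}, from \cite{Mazzuca2024CLT}): since both terms are covariances of local functions, decay forces the constant to vanish in the $N\to\infty$ limit, giving the symmetry. Your route (a) via $a_j\mapsto a_{-j}$ is plausible but you would need to verify carefully the sign the currents $J_j^{[n]}$ pick up under this reflection at finite $N$ with periodic boundary; your route (b) via integration by parts is a genuinely different idea not pursued in the paper and would require independent work.
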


{\color{black} Since the proof of the previous lemma is rather technical, we defer it to the appendix \ref{app:tech_res}}.
     \subsection{Density of states}
     
     Another fundamental quantity to compute the  linearized correlation functions is the so-called \textit{density of states} of the matrix $L$. We recall that it is defined as the weak limit of the \textit{empirical spectral measures}, i.e. as the probability measure $\di \sigma_{\beta,V}(x)$ such that for any bounded and continuous function $f$ 
     
     \begin{equation}
     	\label{eq:def_DOS}
     	\lim_{N\to \infty}\frac{1}{N}\int_{\R_+}f(x) \sum_{j=1}^{N} \delta_{w_j}(x)  = \int_{\R_+} f(x)\di \sigma_{\beta,V}(x)\,,
     \end{equation}
     where $\pm iw_j$ are the eigenvalues of $L$ \eqref{LaxVolterra}, and we assume that the $w_j$ are positive and in decreasing order. We notice that since the matrix $L$ \eqref{LaxVolterra} is anti symmetric the eigenvalues are purely imaginary number and they come in pairs, meaning that if $iw_j$ is an eigenvalue then $-iw_j$ is also an eigenvalue.

%
%
%
%
%
%
     Furthermore, in view of \eqref{eq:def_DOS} and Corollary \ref{cor:mean_cor_gibbs}, we deduce that
     
     \begin{equation}
     	\label{eq:mom_rel}
     	\lim_{N\to\infty}\frac{1}{2N}\meanvalue{Q^{[n]}}{1} = -i\partial_t \cF_{\textrm{Volt}}(\beta,V+(-1)^{n+1}itx^{2n})_{\vert_{t=0}} = \int_{\R_+} w^{2n}\di \sigma_{\beta,V}\,, \quad \forall\, n\in\N\,, n>0\,.
     \end{equation}
     {\color{black}
     The main ingredient of our derivations is to obtain an explicit characterization of the density of states of the Volterra lattice in terms of the one of the \textit{Antisymmetric Gaussian $\beta$ ensemble}. Given this characterization, we are able to explicitly compute the susceptibility matrix $C$ and the current-field matrix $B$, and then the GHD approximation of the correlation functions.}
      
        \section{Antisymmetric Gaussian $\beta$ ensemble in the high temperature regime}
        	\label{sec:antisym}
	The Antisymmetric $\beta$ ensemble is a random matrix ensemble introduced by Dumitriu and Forrester in \cite{Dumitriu_Forrester}; it has the following matrix representation
	
	\begin{equation}
	\label{eq:antiguassian_matrix}
	    Q = \begin{pmatrix}
	    0& y_1 & & &  \\
			-y_1 & 0 & y_2 & &\\
			& \ddots & \ddots & \ddots &\\
			&& \ddots & \ddots &  y_{2N-1}  \\
		 &&&  -y_{2N-1} &0
	    \end{pmatrix}\,,
	\end{equation}
	and the entries of the matrix $Q$ are distributed according to
	
\begin{equation}
\label{eq:antiguassian}
    \di \mu_{AG} = \frac{1}{Z^{AG}_N(\wt \beta,V)}\prod_{j=1}^{2N-1}y_j^{\wt \beta(2N-j)/2 - 1}\mathbbm{1}_{y_j > 0}\exp(\trace{V(Q(\by))})\di \by\,,
\end{equation}
here $V(x)$ can be any function that makes \eqref{eq:antiguassian} normalizable, but for our purpose we  consider  $V(x)$ polynomial of the form $V(x) = (-1)^{\ell+1} c_{\ell}x^{2\ell} + \sum_{j=1}^{\ell-1} c_j x^{2j}$, $c_{\ell} > 0$.
For $V(x) = x^2/2$, it is possible to explicitly compute the partition function $Z_N^{AG}(\wt \beta,x^2/2)$ as

\begin{equation}
	Z_N^{AG}(\wt \beta,x^2/2) = 2^{-2N}\prod_{j=1}^{2N}\Gamma\left(\frac{\wt \beta(2N-j)}{4}\right)
\end{equation}

We are interested in the high-temperature regime for this model, so we set $\wt \beta =\frac{\beta}{N}$, and we rewrite the previous density as
\begin{equation}
\label{eq:antiguassian_ht}
    \di \mu_{AG} = \frac{1}{Z^{AG}_N\left(\frac{\beta}{N},V\right)}\prod_{j=1}^{2N-1}y_j^{\beta\left(1-\frac{j}{2N}\right) - 1}\exp(\trace{V(Q(\by))})\di \by\,\, \quad y_j \geq 0
\end{equation}
This regime was introduced in \cite{Mazzuca2021}, where the authors computed the density of states for this model in the case $V(x) = x^2/2$.
In this particular regime, the partition function  $Z^{AG}_N\left(\frac{\beta}{N},x^2/2\right)$ reads

\begin{equation}
\label{eq:partition_easy_antisym}
	Z^{AG}_N\left(\frac{\beta}{N},\frac{x^2}{2}\right) = 2^{-2N}\prod_{j=1}^{2N} \Gamma\left(\frac{ \beta\left(1 -\frac{j}{2N}\right)}{2}\right)\,.
\end{equation}
\begin{theorem}
	Consider the anti-symmetric $\beta$ ensemble in the high temperature regime \eqref{eq:antiguassian_ht} with potential $V(x) = x^2/2$. The the density of states $\rho_{\beta,V}(y)$ reads
	
\begin{equation}\label{uLdplus1}
	\rho_{\beta,V}(y) = 
	{1 \over \Gamma\left(\frac{\beta}{2} + 1\right) \Gamma\left(\frac{\beta}{2}\right)}
	{|y| \over | W_{\frac{1-\beta}{2},0}(-y^2)|^2},
\end{equation}
 where $W_{\kappa,\mu}$ is the Whittaker function \cite[13.14]{dlmf}.
\end{theorem}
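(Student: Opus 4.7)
The plan is to adapt the transfer-operator / invariant-measure approach developed in \cite{Mazzuca2021}. Since $Q$ is antisymmetric and tridiagonal, its eigenvalues come in purely imaginary conjugate pairs $\pm iw_j$; expanding the characteristic polynomial $\pi_n(\lambda)$ of the leading $n\times n$ principal submatrix of $Q$ along the last row yields the three-term recurrence
\begin{equation}
\pi_n(\lambda) = \lambda\,\pi_{n-1}(\lambda) + y_{n-1}^2\,\pi_{n-2}(\lambda),
\end{equation}
so that the rescaled polynomials $P_n(w) = i^{-n}\pi_n(iw)$ satisfy the real recurrence $P_n(w) = w\,P_{n-1}(w) - y_{n-1}^2\,P_{n-2}(w)$. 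This is precisely the three-term recurrence for the characteristic polynomials of a symmetric Jacobi matrix $\tilde J$ with zero diagonal and off-diagonal entries $(y_j)_{j=1}^{2N-1}$, so the density of states of $Q$ agrees (viewed as a symmetric measure on $\R$) with that of $\tilde J$, and it suffices to compute the limiting empirical spectral distribution of $\tilde J$ in the high-temperature regime.

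Next, I would introduce the ratios $R_n(w) = P_n(w)/P_{n-1}(w)$, which satisfy the first-order stochastic recursion $R_n(w) = w - y_{n-1}^2/R_{n-1}(w)$ with $y_{n-1}^2$ a scaled Gamma variable of shape $\beta(1-(n-1)/(2N))/2$, and express the Stieltjes transform $m_N(w) = \frac{1}{2N}\mathrm{tr}(\tilde J - w)^{-1}$ as an average of $1/R_n$ along the chain. In the bulk scaling $n = \lfloor 2Ns\rfloor$, $s\in(0,1)$, the coefficient distribution varies slowly and a quasi-stationary argument identifies the limiting empirical law of $R_n(w)$ with a deterministic $s$-dependent invariant measure $\nu_s(w,\cdot)$ satisfying a fixed-point equation. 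Taking the Laplace transform $\hat\nu_s(w,t) = \int e^{-tR}\,\nu_s(w,dR)$ converts the fixed-point equation for $\nu_s$ into a linear second-order ODE in $t$; after the change of variable $t\mapsto -w^2$ this ODE is precisely the Whittaker equation with parameters $\kappa = (1-\beta(1-s))/2$ and $\mu = 0$, whose unique solution decaying at $+\infty$ is $W_{\kappa, 0}(-w^2)$.

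Plugging this representation back into $m_N$, integrating over $s \in [0,1]$, and applying the Stieltjes inversion formula yields the density $\rho_{\beta,V}(y)$. The denominator $|W_{(1-\beta)/2, 0}(-y^2)|^2$ arises from the product of the Whittaker function and its complex conjugate when extracting the imaginary part of the resolvent, while the factor $|y|$ is the Jacobian coming from the square of the spectral variable. The prefactor $1/(\Gamma(\beta/2+1)\Gamma(\beta/2))$ is then fixed either by the Wronskian identity between $W_{\kappa,0}$ and $W_{-\kappa,0}$ or, more directly, by matching the second moment $q_1 = \int w^2\,\rho_{\beta,V}(w)\,dw = \beta/2$, which is computed independently from the partition function \eqref{eq:partition_easy_antisym} via Corollary \ref{cor:mean_cor_gibbs}.

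The main obstacle is the rigorous justification of the quasi-stationary limit for the non-autonomous Markov chain $(R_n)$, uniformly in the bulk variable $s$, together with the identification of the resulting fixed-point equation as the Whittaker ODE with the correct parameters and the selection of the unique decaying solution. Once these technical steps are established, the explicit $\Gamma$-function prefactor emerges through standard confluent-hypergeometric identities.
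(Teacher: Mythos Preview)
The paper does not actually prove this theorem: it is stated as a quotation of the result from \cite{Mazzuca2021}, introduced by the sentence ``This regime was introduced in \cite{Mazzuca2021}, where the authors computed the density of states for this model in the case $V(x) = x^2/2$.'' Your proposal is precisely to reproduce the argument of that reference, so in that sense you are aligned with the paper's treatment.

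Your sketch is broadly sound. The reduction from the antisymmetric tridiagonal matrix $Q$ to the symmetric zero-diagonal Jacobi matrix $\tilde J$ via $P_n(w)=i^{-n}\pi_n(iw)$ is correct and is the standard way to import the transfer-operator machinery. The ratio recursion $R_n(w)=w-y_{n-1}^2/R_{n-1}(w)$ with slowly varying Gamma-distributed $y_{n-1}^2$ does lead, after Laplace transform, to a confluent hypergeometric (Whittaker) ODE, and Stieltjes inversion then produces the density with the $|W_{(1-\beta)/2,0}(-y^2)|^{-2}$ structure. You have honestly flagged the genuine technical burden: controlling the quasi-stationary limit of the non-autonomous chain $(R_n)$ uniformly in the bulk parameter $s$, and selecting the correct Whittaker solution. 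One small caution: your normalization check via $q_1=\beta/2$ uses Corollary~\ref{cor:mean_cor_gibbs}, which pertains to the Volterra free energy rather than directly to the antisymmetric ensemble; the cleaner route is the Wronskian identity you mention, or matching against the explicit partition function \eqref{eq:partition_easy_antisym} through Theorem~\ref{thm:LDP_antisym}.
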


The relation between this model and the Volterra lattice was underlined in \cite{Grava2023,Mazzuca2024CLT}. Specifically, defining the free energy for this model as

\begin{equation}
\label{eq:free_energy_antisym}
    \fF_{\textrm{AG}}(\beta,V) = -\lim_{N\to\infty} \frac{1}{2N} Z_N^{AG}\left(\frac{\beta}{N},V\right)\,,
\end{equation}
from \cite[Remark 2.16]{Mazzuca2024CLT} we deduce the following Proposition

\begin{proposition}
   \label{cor:mean_cor_anti}
    Consider $Q_j^{[n]}$ \eqref{eq:locally_conserved_fields}, the Generalized Gibbs ensemble $\di\mu_{\Volt}(\ba)$ \eqref{eq:VolterraGibbs}, the free energy $\cF_{\textrm{Volt}}(\beta,V)$ \eqref{eq:free_energy_volterra}, and the free energy $\fF_{\textrm{AG}}(\beta,V)$ \eqref{eq:free_energy_antisym}, then for any fixed $n,m\in \N$  the following holds true

    \begin{equation}
    \begin{split}
    &\partial_\beta (\beta \fF_{\textrm{AG}}(\beta,V)) = \cF_{\textrm{Volt}}(\beta,V)\\
         &\lim_{N\to\infty}\frac{1}{2N}\meanvalue{Q^{[n]}}{1} = -i\partial_t\partial_\beta  (\beta \fF_{\textrm{AG}}(\beta,V+(-1)^{n+1}itx^{2n}))_{\vert_{t=0}}\,, \\
         &\lim_{N\to\infty}\frac{1}{2N}\meanvalue{Q^{[0]}}{1} = -\partial^2_\beta (\beta \fF_{\textrm{AG}}(\beta,V)) \\ 
         &\lim_{N\to\infty}\frac{1}{2N}\cov{Q^{[n]}}{Q^{[m]}} = \partial_{t_1}\partial_{t_2}\partial_\beta (\beta \fF_{\textrm{AG}}(\beta,V+(-1)^{n+1}it_1x^{2n}+(-1)^{m+1}it_2 x^{2m}))_{\vert_{t_1=t_2=0}}\,, \\
         &\lim_{N\to\infty}\frac{1}{2N}\cov{Q^{[n]}}{Q^{[0]}} = -i\partial_{t}\partial^2_{\beta}(\beta \fF_{\textrm{AG}}(\beta,V+(-1)^{n+1}itx^{2n}))_{\vert_{t=0}}\,, \\ 
         &\lim_{N\to\infty}\frac{1}{2N}\cov{Q^{[0]}}{Q^{[0]}} = -\partial^3_{\beta}(\beta \fF_{\textrm{AG}}(\beta,V))\,, 
    \end{split}
    \end{equation}

    where the expected value is taken with respect to the Generalized Gibbs ensemble  $\di\mu_{\Volt}(\ba)$ \eqref{eq:VolterraGibbs}.
    
\end{proposition}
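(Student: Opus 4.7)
The proposition reduces to a single identity connecting the two free energies, namely
\[
\partial_\beta\bigl(\beta\, \fF_{\textrm{AG}}(\beta, V)\bigr) \;=\; \cF_{\textrm{Volt}}(\beta, V),
\]
which is the first line of the statement. Once this is in hand, every subsequent line is obtained by mechanical substitution into Corollary \ref{cor:mean_cor_gibbs}, so the real work lies in the partition-function relation.

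First I would verify the identity in the Gaussian case $V(x) = x^2/2$ as a sanity check, using the explicit product formulas \eqref{eq:partition_easy_volterra} and \eqref{eq:partition_easy_antisym}. A Riemann-sum argument gives
\[
\fF_{\textrm{AG}}(\beta, x^2/2) \;=\; \ln 2 \;-\; \int_0^1 \ln\Gamma\!\left(\tfrac{\beta s}{2}\right)ds,
\]
and after the change of variable $u=\beta s$ one finds $\beta\, \fF_{\textrm{AG}}(\beta, x^2/2) = \beta \ln 2 - \int_0^\beta \ln\Gamma(u/2)\,du$. Differentiating in $\beta$ recovers exactly $\cF_{\textrm{Volt}}(\beta, x^2/2) = \ln 2 - \ln\Gamma(\beta/2)$, confirming the identity in this case.

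For general polynomial $V$, I would invoke Remark 2.16 of \cite{Mazzuca2024CLT}, where this relation is established at the level of the free energies. The structural reason is the $j$-dependent exponent $\beta(1-j/(2N))$ in the antisymmetric ensemble \eqref{eq:antiguassian_ht}: in the thermodynamic limit a Riemann sum converts the discrete product over $j$ into an integral over $s \in [0,1]$, or equivalently (after rescaling $u=\beta s$) over $u\in [0,\beta]$. The Volterra partition function corresponds to the constant-$\beta$ case, so differentiating $\beta\, \fF_{\textrm{AG}}$ in $\beta$ exactly undoes the averaging over $u$ and returns $\cF_{\textrm{Volt}}$. The remaining six identities then follow by direct substitution into Corollary \ref{cor:mean_cor_gibbs}; for instance the expression for $\lim \tfrac{1}{2N}\meanvalue{Q^{[n]}}{1}$ becomes $-i\partial_t \partial_\beta(\beta\, \fF_{\textrm{AG}}(\beta, V+(-1)^{n+1}itx^{2n}))|_{t=0}$. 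The identities involving $Q^{[0]}$ exploit that the insertion $Q^{[0]} = \tfrac12\sum_j \ln a_j$ corresponds to differentiation in $\beta$ of the GGE, so each such insertion promotes one derivative on $\beta\,\fF_{\textrm{AG}}$ to an additional $\partial_\beta$, accounting for the $\partial_\beta^2$ and $\partial_\beta^3$ on the last two lines.

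The main obstacle is not the substitution step but the first one: justifying the passage from the discrete product to the continuous integral in the thermodynamic limit in such a way that the resulting identity is stable under the complex perturbations $V + itx^{2n}$ and admits interchange with the $t$ and $\beta$ derivatives. This requires the uniform convergence and regularity of the variational problem defining the limiting free energies, which is precisely what is supplied by \cite{Mazzuca2024CLT}.
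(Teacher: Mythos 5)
Your proposal is correct and follows essentially the same route as the paper: reduce everything to the single identity $\partial_\beta(\beta\,\fF_{\textrm{AG}}(\beta,V))=\cF_{\textrm{Volt}}(\beta,V)$, verify it for $V(x)=x^2/2$ from the explicit partition functions \eqref{eq:partition_easy_volterra}--\eqref{eq:partition_easy_antisym}, cite \cite{Mazzuca2024CLT} for general polynomial $V$, and then substitute into Corollary \ref{cor:mean_cor_gibbs}. Your explicit Riemann-sum computation in the Gaussian case and the structural explanation via the $j$-dependent exponent are correct and in fact more detailed than what the paper records.
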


{ \color{black}

\begin{proof}
In view of Corollary \ref{cor:mean_cor_gibbs}, one has just to show that

\[ \partial_\beta (\beta \fF_{\textrm{AG}}(\beta,V)) = \cF_{\textrm{Volt}}(\beta,V)\,. \]

For $V(x) = \frac{x^2}{2}$, the previous equality can be deduced from the explicit expression of the  partition functions \eqref{eq:partition_easy_volterra}-\eqref{eq:partition_easy_antisym}. The case for polynomial $V(x)$ is a corollary of \cite[Theorem 1.5]{Mazzuca2024CLT}.
\end{proof}
}
\subsection{Density of states}
The density of states $\rho_{\beta,V}$ for the Anti-symmetric $\beta$ ensemble can be computed explicitly when the potential $V(x) = x^2/2$. For general polynomial potential, we can characterize the density of states for this ensemble using a \textit{Large Deviation principle} (LDP) \cite{LDPbook}. This is not surprising, indeed for all the $\beta$ ensembles this is true, see \cite{forrester}. 
In our case, the LDP is a corollary of \cite[Theorem 1.2]{Zelada2019} in combination with the result of Dumitriu--Forrester \cite{Dumitriu_Forrester}, who were able to compute explicitly the joint eigenvalue density of the Anti-symmetric Gaussian $\beta$ ensemble

\begin{theorem}
	\label{thm:DF_thm}
	Consider the anti-symmetric $\beta$ ensemble \eqref{eq:antiguassian}, and let $iw_j\, j=1,\ldots, N$ be the first $N$ ordered eigenvalues $w_1\geq w_2\geq \ldots\geq w_N>0$ of the matrix $Q$ \eqref{eq:antiguassian_matrix} endowed with the distribution \eqref{eq:antiguassian}, where the potential $V(x)$ is such that
	\begin{equation}
		\label{eq:growth_condition}
		 \lim_{|x|\to\infty} \frac{|V(x)|}{\ln(|x|)} = +\infty\,,
	\end{equation}
	
	then the probability density function (PDF) for $w_1,\ldots,w_N$ is given by
	
	\begin{equation}
		\frac{1}{\mathfrak{C}_{N,\wt \beta,V}} \prod_{j=1}^{N}w_j^{\wt\beta/2 -1}e^{\sum_{j=1}^N V(w_j) + V(-w_j)}\prod_{1\leq j < i \leq N} \left(w_j^2 - w_i^2 \right)^{\wt \beta}d \bw\,.
	\end{equation}

For $V(x) = x^2/2$ , one can explicitly evaluate $\mathfrak{C}_{N,\beta/N,x^2/2}$ as

\begin{equation}
	\mathfrak{C}_{N,\wt \beta,x^2/2}  = \frac{1}{N!}\prod_{j=1}^{N} \frac{\Gamma\left(1+ \frac{j\wt\beta}{2}\right) \Gamma\left(\frac{(2j-1)\wt\beta}{4}\right)}{2\Gamma\left(1+ \frac{\wt \beta}{2}\right)} 
\end{equation}

Furthermore, let $q_j$ $j=1,\ldots, N$ be the (positive) first components of the independent eigenvector corresponding to $iw_j$. Then, the vector $(q_1,\ldots, q_N)$ has a Dirichlet distribution $D_N[(\wt\beta/2)^N]$ (here $(\wt\beta/2)^N$ denotes $\wt \beta/2$ repeated $N$ times).

\end{theorem}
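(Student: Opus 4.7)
The idea is to derive the joint eigenvalue density by the standard tridiagonal-to-spectral change of variables, in the spirit of the Dumitriu--Edelman treatment of the Hermite and Laguerre $\beta$-ensembles, adapted here to the antisymmetric setting. Because the matrix $Q$ in \eqref{eq:antiguassian_matrix} is real antisymmetric with strictly positive sub-diagonal entries, its positive imaginary parts $w_1>w_2>\cdots>w_N>0$ are all distinct, and the corresponding eigenvectors can be normalized so that their first components $q_1,\ldots,q_N$ are strictly positive. First I would check that the map
\[
(y_1,\ldots,y_{2N-1}) \longmapsto (w_1,\ldots,w_N;\,q_1,\ldots,q_N)
\]
is a smooth bijection onto the product of the ordered Weyl chamber in $\R^N_+$ and the open positive octant of the unit sphere $S^{N-1}_+=\{q\in\R^N_+:\sum_j q_j^2=1\}$; the dimension count $2N-1 = N+(N-1)$ matches.

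The central computation is the Jacobian. A convenient route is to consider the symmetric positive matrix $M=-Q^2$, whose even and odd invariant subspaces carry Jacobi matrices with spectrum $\{w_j^2\}_{j=1}^N$ and first-row data encoded by $\{q_j^2\}$; the classical Trotter identity for Jacobi matrices then translates into a product identity of the shape
\[
\prod_{j=1}^{2N-1} y_j^{\,2N-j} \;=\; c_N \,\Bigl(\prod_{1\le i<j\le N}(w_i^2 - w_j^2)\Bigr)\prod_{j=1}^{N} q_j ,
\]
for an explicit combinatorial constant $c_N$, together with a Jacobian determinant proportional to $\prod_{i<j}(w_i^2-w_j^2)/\prod_j q_j$. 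The exponents $\tilde\beta(2N-j)/2-1$ appearing in \eqref{eq:antiguassian} are precisely the ones needed so that combining the density with the Jacobian produces the Vandermonde-type factor $\prod_{i<j}(w_i^2-w_j^2)^{\tilde\beta}$ together with the weights $\prod_j w_j^{\tilde\beta/2-1}$ and $\prod_j q_j^{\tilde\beta-1}$. Since $\trace{V(Q)}=\sum_{j=1}^N\bigl(V(iw_j)+V(-iw_j)\bigr)$ depends only on the $w_j$, the $q$-variables fully decouple.

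Two conclusions then fall out immediately. First, the marginal of $(q_1,\ldots,q_N)$ on $S^{N-1}_+$ has density proportional to $\prod_j q_j^{\tilde\beta-1}$, so under the substitution $\xi_j = q_j^2$ the law of $(q_1^2,\ldots,q_N^2)$ on the open simplex is Dirichlet $D_N[(\tilde\beta/2)^N]$. Second, integrating out the $q$-variables against this density contributes the factor $\Gamma(\tilde\beta/2)^N/\Gamma(N\tilde\beta/2)$ and yields the claimed joint density of the ordered $(w_1,\ldots,w_N)$. For the explicit evaluation of $\mathfrak{C}_{N,\tilde\beta,x^2/2}$, the substitution $t_j=w_j^2$ converts the remaining $w$-integral into a Laguerre $\beta$-Selberg integral whose Mehta-type evaluation gives the advertised product of Gamma functions. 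I expect the main obstacle to be the Jacobian identity in the second paragraph: while its analogue for symmetric tridiagonal matrices is textbook, in the antisymmetric case the $\pm iw_j$ pairing effectively doubles the spectrum, and one must verify carefully that no spurious combinatorial factor appears in the exponent $2N-j$; I would cross-check this exponent both through the bilinear identity above and by matching the resulting partition function at $V(x)=x^2/2$ against the explicit formula \eqref{eq:partition_easy_antisym}.
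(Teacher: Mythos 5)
Your overall strategy --- the change of variables $(y_1,\ldots,y_{2N-1})\mapsto(\bw,\bq)$, the decoupling of the $\bq$-marginal into a Dirichlet law for $(q_1^2,\ldots,q_N^2)$, and the Laguerre--Selberg evaluation of $\mathfrak{C}_{N,\wt\beta,x^2/2}$ after $t_j=w_j^2$ --- is precisely the route of Dumitriu and Forrester, which is all the paper itself does here: it does not reprove the theorem but cites \cite{Dumitriu_Forrester}, records the resulting Jacobian identity \eqref{eq:jacobian}, and notes that the extension to general $V$ is immediate since $\trace{V(Q)}=\sum_j\bigl(V(iw_j)+V(-iw_j)\bigr)$ depends only on the $w_j$. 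Your dimension count, the simplicity of the spectrum via the Jacobi blocks of $-Q^2$, and the endgame are all correct.

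The step you rightly flagged as the risk is, however, wrong as written, and since it carries the entire proof it cannot be left in this form. You propose $\prod_{j=1}^{2N-1}y_j^{\,2N-j}=c_N\prod_{i<j}(w_i^2-w_j^2)\prod_j q_j$; this already fails at $N=1$, where the left-hand side is $y_1=w_1$ while the right-hand side is a constant. Matching the $\wt\beta$-linear exponents of \eqref{eq:antiguassian} against the target density (or against \eqref{eq:jacobian}) forces instead
\begin{equation}
\prod_{j=1}^{2N-1}y_j^{\,2N-j}\;=\;c_N\,\prod_{j=1}^{N}w_j\;\prod_{j=1}^{N}q_j^{2}\;\prod_{1\le i<j\le N}\left(w_i^2-w_j^2\right)^{2},
\end{equation}
so your version is missing the factor $\prod_j w_j$ and has both the Vandermonde and the $q_j$'s at half the correct power; combined with your stated Jacobian it would yield $\prod_{i<j}(w_i^2-w_j^2)^{\wt\beta/2}$ and $\prod_j q_j^{\wt\beta/2}$ in place of the exponents $\wt\beta$ and $\wt\beta-1$ appearing in the theorem, and the cross-check against \eqref{eq:partition_easy_antisym} you propose would have caught this. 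The correct identity follows from applying the classical Trotter identity to each of the two $N\times N$ Jacobi blocks of $-Q^2$ (whose off-diagonal entries are the products $y_{2k}y_{2k+1}$, resp.\ $y_{2k-1}y_{2k}$, and whose spectral weights are expressible through the $q_j$ and the $w_j$); once it is in place, the rest of your argument goes through and reproduces \eqref{eq:jacobian} and the stated normalization constant.
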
 
We notice that the previous theorem is stated in \cite{Dumitriu_Forrester} just for the case $V(x) = x^2/2$, but it is straighforeward to generalize it for potential $V(x)$ satisfying condition \eqref{eq:growth_condition}.
One of the key steps of the proof of Dumitriu and Forrester is the explicit computation of the Jacobian  of $\Phi \,:\, \by \to (\bw,\bq)$. Specifically, they proved the following:

\begin{equation}
	\label{eq:jacobian}
	\prod_{j=1}^{2N}2y_j^{\wt \beta j/2 - 1} d\by = \left(c_q^{\wt\beta} \prod_{j=1}^{N}q_j^{\wt \beta -1}d \bq\right)\left(\zeta_N(\wt\beta)\prod_{j=1}^{N}w_j^{\wt\beta/2 -1}\prod_{1\leq j < i \leq N} \left(w_j^2 - w_i^2 \right)^{\wt \beta}d  \bw \right)
\end{equation}

Where, $c_q^{\wt \beta} = 2^{N-1}\Gamma\left(\frac{1}{2}\wt \beta N\right) \Gamma\left(\frac{1}{2}\wt \beta \right)^{-N}$ which normalize to $1$ the first term, and $\zeta_N(\wt \beta)$ is given by

\begin{equation}
	\label{eq:zeta}
	\zeta_N(\wt \beta) = \mathfrak{C}_{N,\wt \beta,x^2/2}^{-1}\prod_{j=1}^{2N}\Gamma\left(\frac{\wt \beta j}{2}\right)\frac{1}{N!}\,.
\end{equation}
 We are interested in the high temperature regime for this ensemble, {\color{black} i.e. the regime where $\wt \beta = \beta/N$. From the previous discussion,} we deduce the following

\begin{corollary}
	In the same hypotheses as before, let $\wt \beta = \beta/N$,
	then the probability density function (PDF) for $w_1,\ldots,w_N\in\R_+$ is given by
	
	\begin{equation}
		\frac{1}{\mathfrak{C}_{N,\beta/N,V}} \prod_{j=1}^{N}w_j^{\frac{\beta}{2N} -1}e^{\sum_{j=1}^N V(iw_j) + V(-iw_j)}\prod_{1\leq j < i \leq N} \left(w_j^2 -w_i^2 \right)^{\frac{\beta}{N}}d \bw\,.
	\end{equation}

Furthermore, for $V(x) = x^2/2$ , one can explicitly evaluate $\mathfrak{C}_{N,\beta/N,x^2/2}$ as

\begin{equation}
	\label{eq:scaled_constant}
	\mathfrak{C}_{N,\beta/N,x^2/2}  = \frac{1}{N!}\prod_{j=1}^{N} \frac{\Gamma\left(1+\frac{j\beta}{2N}\right) \Gamma\left(\frac{(2j-1)\beta}{4N}\right)}{2\Gamma\left(1+ \frac{\beta}{2N}\right)} 
\end{equation}

\end{corollary}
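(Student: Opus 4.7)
The plan is to obtain the corollary as a direct specialization of Theorem \ref{thm:DF_thm} to the high-temperature scaling $\wt\beta = \beta/N$. First, I would invoke the theorem to get the joint density of the positive parameters $w_1 > \cdots > w_N > 0$ associated with the purely imaginary eigenvalues $\pm i w_j$ of the antisymmetric matrix $Q$. Because $Q$ is antisymmetric, the trace in the matrix measure unfolds as $\trace V(Q) = \sum_{j=1}^{N}\bigl[V(iw_j) + V(-iw_j)\bigr]$, so the potential factor in the joint eigenvalue density is $\exp\!\bigl(\sum_{j} V(iw_j) + V(-iw_j)\bigr)$. Replacing $\wt\beta$ by $\beta/N$ throughout the formula of Theorem \ref{thm:DF_thm} then yields the first displayed equation of the corollary verbatim. (Alternatively, one can start from the change-of-variables identity \eqref{eq:jacobian} and integrate out the Dirichlet-distributed eigenvector components $q_j$, which contribute the factor $1$ by construction.)

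For the explicit value of the normalizer in the Gaussian case $V(x) = x^2/2$, I would simply substitute $\wt\beta = \beta/N$ into the closed form
\[
\mathfrak{C}_{N,\wt\beta,x^2/2} = \frac{1}{N!}\prod_{j=1}^{N} \frac{\Gamma\!\left(1+ \tfrac{j\wt\beta}{2}\right)\,\Gamma\!\left(\tfrac{(2j-1)\wt\beta}{4}\right)}{2\,\Gamma\!\left(1+ \tfrac{\wt\beta}{2}\right)}
\]
recorded in Theorem \ref{thm:DF_thm}. This is a purely algebraic substitution and reproduces \eqref{eq:scaled_constant} with no further work.

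There is no genuine obstacle; the corollary is a mechanical rescaling of the Dumitriu--Forrester result. The only point requiring a small verification is the reconciliation between the exponent $\sum_j V(w_j)+V(-w_j)$ as written in Theorem \ref{thm:DF_thm} and the form $\sum_j V(iw_j)+V(-iw_j)$ appearing in the corollary; the latter is the correct expression because the spectrum of $Q$ lies on the imaginary axis, and for an admissible potential $V(x) = (-1)^{\ell+1} c_\ell x^{2\ell} + \sum_{j=1}^{\ell-1} c_j x^{2j}$ with $c_\ell>0$ one readily checks that $V(iw)+V(-iw)$ has leading term $-2c_\ell w^{2\ell}$, which guarantees integrability of the resulting density on $\R_+^N$.
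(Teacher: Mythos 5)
Your proposal is correct and matches the paper's (implicit) argument: the corollary is obtained by direct substitution of $\wt\beta = \beta/N$ into Theorem \ref{thm:DF_thm}, both for the joint density and for the Gaussian normalization constant, which is exactly how the paper derives it ("from the previous discussion, we deduce..."). Your side remark reconciling the exponent $\sum_j V(w_j)+V(-w_j)$ in the theorem with $\sum_j V(iw_j)+V(-iw_j)$ in the corollary is a worthwhile clarification of a notational inconsistency in the paper, but it does not alter the route.
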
 

Using the previous result  combined with \cite[Theorem 1.1]{Zelada2019} we deduce the following

\begin{theorem}
	\label{thm:LDP_antisym}
	Consider the functional $\cF_{\textrm{AG}}(\beta,V)[\rho]$ defined as 
	\begin{equation}
		\label{eq:functional}
		\begin{split}
					\cF_{\textrm{AG}}(\beta,V)[\rho] &		= -\frac{\beta}{2} \int\int_{\R_+^2}\ln(\vert x^2-y^2 \vert)  \rho(x) \rho(y)\di x \di y - \int_{\R_+}(V(ix)  +V(-ix)  + \ln(|x|)) \rho(x)\di x \\ & +\int_{\R_+}\ln( \rho(|x|)) \rho(|x|)\di x 
		\end{split}
	\end{equation}
	here $\rho(x)$ is an absolutely continuous measure with respect to the Lebesgue one, has support on the positive real line. The previous functional has a unique minimizer  $\rho_{\beta,V}(x)$, which is absolutely continuous with respect to the Lebesgue measure. In particular, $\rho_{\beta,V}(x)$ is the density of states of the Anti-symmetric $\beta$ ensemble in the high temperature regime. Furthermore, 
	
	\begin{equation}
		\fF_{\textrm{AG}}(\beta,V)  = \frac{1}{2}\cF_{\textrm{AG}}(\beta,V)[\rho_{\beta,V}]  + \frac{1}{2}\int_0^1 \ln\left(\frac{\beta}{2}x\right) dx - \frac{\ln(2)}{2}\,.
	\end{equation}
%
%
\end{theorem}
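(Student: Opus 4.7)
The claim is essentially a direct application of the high-temperature LDP for $\beta$-type ensembles developed in \cite{Zelada2019}, together with an explicit Stirling computation for the additive constant. My plan is as follows.

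First, I would start from the joint density of the positive eigenvalues $w_1>\dots>w_N>0$ given by the corollary of Theorem \ref{thm:DF_thm}:
\begin{equation}
\frac{1}{\mathfrak{C}_{N,\beta/N,V}}\exp\!\left(\sum_{j=1}^N\bigl[V(iw_j)+V(-iw_j)\bigr]\right)\prod_{j=1}^N w_j^{\beta/(2N)-1}\prod_{1\le i<j\le N}(w_j^2-w_i^2)^{\beta/N}\,d\bw.
\end{equation}
Taking logarithms and dividing by $N$, this density fits the template of \cite[Theorem 1.1]{Zelada2019}: a two-body logarithmic repulsion in the squared variables, a one-body confining potential $x\mapsto V(ix)+V(-ix)-\ln|x|$ (the $-\ln|x|$ absorbing the $w_j^{-1}$ factor), both weighted by $N$, and a repulsion coefficient $\beta/N$. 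The growth hypothesis \eqref{eq:growth_condition} is exactly the confinement assumption required by Zelada.

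Second, \cite[Theorem 1.1]{Zelada2019} then produces an LDP at speed $N$ for the empirical measure $\mu_N=\frac{1}{N}\sum_{j=1}^N\delta_{w_j}$ with good rate function $\cF_{\textrm{AG}}(\beta,V)[\rho]-\min\cF_{\textrm{AG}}(\beta,V)$. Existence of a minimizer follows from lower semicontinuity together with tightness coming from the confinement. Uniqueness comes from strict convexity: the entropy $\int\rho\ln\rho$ is strictly convex, the potential term is linear, and the quadratic form $-\int\!\!\int\ln|x^2-y^2|\rho(x)\rho(y)\,dx\,dy$ is convex on probability measures supported in $\R_+$ (this is the classical log-energy property, transparent after the change of variables $u=x^2$). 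The LDP then identifies the weak limit of $\mu_N$, namely the density of states $\rho_{\beta,V}$, with this unique minimizer.

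Third, to pin down the additive constant in the free-energy identity, I would first treat $V(x)=x^2/2$. There, $Z_N^{AG}(\beta/N,x^2/2)$ (formula \eqref{eq:partition_easy_antisym}) and $\mathfrak{C}_{N,\beta/N,x^2/2}$ (formula \eqref{eq:scaled_constant}) are explicit double products of Gamma functions. A Stirling expansion combined with the Riemann-sum limit $\frac{1}{N}\sum_{j=1}^N\ln(\beta j/(2N))\to\int_0^1\ln(\beta x/2)\,dx$ yields the stated constant $\frac{1}{2}\int_0^1\ln(\beta x/2)\,dx-\frac{\ln 2}{2}$. For general polynomial $V$ satisfying \eqref{eq:growth_condition}, the Laplace-type upper and lower bounds built into the LDP give
\begin{equation}
\fF_{\textrm{AG}}(\beta,V)-\fF_{\textrm{AG}}(\beta,x^2/2)=\tfrac{1}{2}\bigl(\cF_{\textrm{AG}}(\beta,V)[\rho_{\beta,V}]-\cF_{\textrm{AG}}(\beta,x^2/2)[\rho_{\beta,x^2/2}]\bigr),
\end{equation}
so the same constant propagates.

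The main obstacle is verifying that Zelada's statement applies verbatim: his theorem is usually phrased with repulsion $\prod|w_i-w_j|^{\beta/N}$ on $\R$, whereas here it is $\prod(w_j^2-w_i^2)^{\beta/N}$ on $\R_+$. Either one cites a version already allowing this modification, or one first performs $u_j=w_j^2$ (with Jacobian $\prod w_j^{-1}dw_j$, which absorbs cleanly into the one-body weight) to bring the problem to the standard form; in either case the intermediate confining potential needs to be checked to satisfy Zelada's growth conditions, which is straightforward given \eqref{eq:growth_condition}. The strict convexity argument and the Stirling bookkeeping for the constant are technical but routine.
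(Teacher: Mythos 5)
Your proposal is correct and follows essentially the same route as the paper: factor out the $V$-independent Jacobian/normalization part of $Z_N^{AG}(\beta/N,V)$, apply Garcia-Zelada's high-temperature LDP to the Dumitriu--Forrester eigenvalue density to identify the rate functional $\cF_{\textrm{AG}}$ and its minimizer with the density of states, and extract the additive constant from the explicit Gamma-function products via a Riemann-sum/Stirling computation. The only organizational difference is that the paper reads the constant off the $V$-independent factor $\zeta_N(\beta/N)$ directly rather than computing the Gaussian case and propagating through the LDP, and your added remarks on strict convexity and on reducing the $|w_i^2-w_j^2|$ interaction to Zelada's standard form via $u=w^2$ supply details the paper leaves implicit.
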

{\color{black} Since the proof of the previous Theorem is rather technical, we defer it to the appendix \ref{app:tech_res}}

\begin{remark}
	We notice that, following the same procedure as in \cite{Guionnet2022,Mazzuca2023}, it would be possible to obtain a LDP also for the Volterra lattice, and generalize Corollary \ref{cor:mean_cor_anti} for a general potential satisfying \eqref{eq:growth_condition}.
\end{remark}

\section{On the way to the matrix $C$ and $B$}
\label{sec:computationCB}
{\color{black} In this section, we prove our main result, i.e. Theorem \ref{thm:main}. Therefore, we show how to compute the susceptibility matrix $C$ and the charge-current correlation matrix $B$.

The main idea is to enforce Proposition \ref{cor:mean_cor_anti} and Lemma \ref{lem:mean_currents} to obtain an explicit expression of $C,B$ in terms of the free energy $\fF_{\text{AG}}$ of the anti-symmetric Gaussian $\beta$ ensemble at high temperature, and then use Theorem \ref{thm:LDP_antisym} to rewrite them in terms of the density of states $\rho_{\beta,V}$ of the anti-symmetric ensemble. Finally, we use the relation between $\rho_{\beta,V}$ and $\sigma_{\beta,V}$, the density of states of the Volterra lattice, to conclude the proof.}

From Corollary \ref{cor:mean_cor_anti}, we know that 

\begin{equation}
	\partial_\beta (\beta \fF_{\textrm{AG}}(\beta,V)) = \cF_{\textrm{Volt}}(\beta,V)\,,
\end{equation}
which combined with Theorem \ref{thm:LDP_antisym} gives

\begin{equation}
	\cF_{\textrm{Volt}}(\beta,V) = \partial_\beta\left(\frac{\beta}{2} \cF_{\textrm{AG}}(\beta,V)[\rho_{\beta,V}]\right) +\frac{\ln(\beta)}{2} - \ln(2)\,.
\end{equation}
For the following computations, it is more convenient to absorb $\beta$ into the measure $\rho$ by setting $\varrho = \beta \rho$, in this way we get the modified functional from $\beta \cF_{\textrm{AG}}(\beta,V)[\beta^{-1}\varrho] = \cF[\varrho] - \beta\ln\left(\beta\right)$, where

\begin{equation}
	\label{eq:var_no_beta}
	\cF[\varrho] = -\frac{1}{2}\int\int_{\R^2_+}\ln(\vert x^2-y^2 \vert) \varrho(x)\varrho(y)\di x \di y - \int_{\R_+}(V(ix)  + V(-ix) - \ln(|x|)) \varrho(x)\di x +\int_{\R_+}\ln(\varrho(x))\varrho(x)\di x \,,
\end{equation}
which has to be minimized under the condition that

\begin{equation}
	\varrho \geq 0\,,\qquad \int_{\R_+} \varrho(x)\di x = \beta\,.
\end{equation}

We define the unique minimizer $\varrho^\star$; {\color{black} we notice that  $\varrho^\star = \beta \rho_{\beta,V}$}. Then

\begin{equation}
	\cF_{\textrm{Volt}}(\beta,V) = \frac{1}{2}\partial_\beta \cF[\wt \varrho^\star]  - \ln(2) - \frac{1}{2}  \,.
\end{equation}
The minimizer $\varrho^\star$ is characterized by the Euler-Lagrange equation

\begin{equation}
	\label{eq:euler_lagrange}
-\int_{\R}\ln(\vert x^2-y^2 \vert) \varrho^\star(y)\di y  -(V(ix) + V(-ix))  +\ln(|x|) + \ln(\varrho^\star) + 1 -  \mu(\beta,V) = 0\,,
\end{equation}
where $\mu(\beta,V)$ is a function depending on $\beta,V$, {\color{black} but not on $x$}.

To obtain the free energy of the Volterra lattice, we differentiate the functional as

\begin{equation}
	\begin{split}
			\partial_\beta \cF[\varrho^\star] & =  -\int_{\R^2}\ln(\vert x^2-y^2 \vert)\partial_\beta\varrho^\star(x) \varrho^\star(y)\di x\di y  - \int_{\R}(V(ix)  + V(-ix) - \ln(|x|))\partial_\beta \varrho^\star(x)\di x \\ &+ \int_{\R}\ln( \varrho^\star)\partial_\beta \varrho^\star(x) + \int_{\R}\partial_\beta \varrho^\star(x) \\ & = -\int_{\R^2}\ln(\vert x^2-y^2 \vert)\partial_\beta\varrho^\star(x) \varrho^\star(y)\di x\di y  - \int_{\R}(V(ix)  + V(-ix) - \ln(|x|))\partial_\beta \varrho^\star(x)\di x \\ &+ \int_{\R}\ln( \varrho^\star)\partial_\beta \varrho^\star(x) + 1  \,.
	\end{split}
\end{equation}
By testing \eqref{eq:euler_lagrange} against $\partial_{\beta}\varrho^\star$ we deduce that

\begin{equation}
	\partial_\beta \cF[\varrho^\star] =  \mu(\beta,V)\,,
\end{equation}
which implies that
\begin{equation}
	\label{eq:free_energy_rel}
	\cF_{\textrm{Volt}}(\beta,V) = \frac{\mu(\beta,V)}{2}  - \ln(2) - \frac{1}{2}\,.
\end{equation}

Consider now the following chain of equality

\begin{equation}
	\partial_\mu \varrho^\star = \partial_\mu (\beta \rho_{\beta,V}) = \partial_{\beta}(\beta \rho_{\beta,V}) (\partial_{\beta}\mu)^{-1} = \sigma_{\beta,V} \kappa^{-1}\,,
\end{equation}
where we defined 

\[\kappa= 2\partial_\beta \cF_{\textrm{Volt}}(\beta,V) =-\meanvalue{\ln(a_1)}{1} .\]

 Following Spohn, we define a new measure $\sigma= \sigma_{\beta,V}\kappa^{-1}$, and we notice that $\la \sigma \ra = \kappa^{-1}$.

The measure $\sigma,\varrho^\star$ play a crucial role in the computation of the matrices $B,C$. To simplify the notation we drop the upper index $\star$ from $\varrho^\star$ . Before proceeding with the computation of such matrices, we have to introduce the following operator

\begin{equation}
	\label{eq:def_T}
	T\psi(w) = \int_{\R_+}\ln(\vert w^2-z^2\vert )\psi(z)\di z\,\quad w\in \R\,,
\end{equation}
{\color{black} which is related to the scattering shift of the Volterra lattice.} Using this operator, we can introduce the \textit{dressing of a function }$\psi$

\begin{equation}
	\label{eq:dressing}
	\dr{\psi} =  \psi + T\varrho\dr{\psi}\,, \quad \dr{\psi} = (1-T\varrho)^{-1}\psi\,, 
\end{equation}
here $\varrho$ is just a multiplicative operator. We notice that the dressing of any real function according to \eqref{eq:dressing} is \textit{even}, {\color{black} therefore if the function is differentiable at $0$, then the dressed function has a stationary point at $0$}. Furthermore, by differentiating the Euler-Lagrange equation with respect to $\mu$ we deduce the following chain of equality
\begin{equation}
	\label{eq:useful}
	\sigma = \left(1-\varrho T \right)^{-1} \varrho = \varrho\left(1- T \varrho \right)^{-1}[1] = \varrho\dr{1} \,,
\end{equation}
where we used the fact that $\left(1-\rho T \right)^{-1} \rho = \rho\left(1- T \rho \right)^{-1}[1]$ for any measure $\rho$.

Using this notation, we can express the moments of the Volterra lattice as

\begin{equation}
	q_n = \meanvalue{Q_0^{[n]}}{1}  = \kappa \jap{ \sigma w^{2n}}\,,
\end{equation}

where for any function $f$ we defined $\jap{f} = \int_{\R_+} f(w) \di w$. 

The following Proposition contains several properties of the dressing operator and the measure $\varrho$ that we use to compute the matrices $B,C$. {\color{black} Since the proof is rather technical, we defer it to the appendix \ref{app:tech_res}.}

\begin{proposition}
\label{prop:dress_prop}
	Consider the measure $\varrho$ defined as the unique minizier of \eqref{eq:var_no_beta}, the operator $T$ defined in \eqref{eq:def_T} and the dressing operator \eqref{eq:dressing}. Then the following holds true
	
	\begin{enumerate}
		\item for any function $f$
		\begin{equation}
			\label{eq:prop1}
			(1-\varrho T)^{-1}[\varrho f] = \varrho(1- T\varrho)^{-1}[f]\,,
		\end{equation}
		
		\item  for any function $f,g$ 
		\begin{equation}
			\label{eq:prop2}
			\jap{(1-\varrho T )^{-1}[f] g} = \jap{f \dr{g}}\,,
		\end{equation}
		\item for any variable $\odot$
		
		\begin{equation}
			\label{eq:prop3}
			\partial_\odot \partial_\mu \varrho = (1-\varrho T)^{-1} \partial_\odot \varrho (1- T\varrho)^{-1}[1]
		\end{equation}
		
		\item for any function $f$
		
		\begin{equation}
			\label{eq:prop4}
			\partial_\mu\jap{\sigma f} = \jap{\sigma \dr{1}\dr{f}}\,.
		\end{equation}
		
		\item Consider the perturbed potential $V(x) \to V(x) +(-1)^{n+1}it_nx^{2n}$ then 
		
		 \begin{equation}
		 	\label{eq:moments_mu}
			\partial_{t_n} \mu(\beta, V(x) + (-1)^{n+1} it_n x^{2n})_{\vert_{t_n=0}} = 2iq_n\,.
		\end{equation} 
		
		\item For any function $\psi$
		
		\begin{equation}
			\label{eq:prop6}
			\partial_{t_n}\left( (1-T\varrho(V+ (-1)^{n+1}it_n\lambda^{2n}))^{-1}(\psi)\right)_{\vert_{t_n=0}} = (1-T\varrho)^{-1}\left(T\partial_{t_n}\varrho(V+ (-1)^{n+1}it_n\lambda^{2n})_{\vert_{t_n=0}}\dr{\psi}\right)\,.
		\end{equation}
	\end{enumerate}
\end{proposition}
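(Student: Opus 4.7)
The plan is to prove the six identities by combining formal Neumann-series manipulations with differentiation of the Euler--Lagrange equation \eqref{eq:euler_lagrange} for the minimizer $\varrho$. Items~(1) and~(2) are purely algebraic. For~(1) I would observe that $(\varrho T)^n \varrho = \varrho (T\varrho)^n$ as operators, so the geometric-series expansions of $(1-\varrho T)^{-1}\varrho$ and $\varrho(1-T\varrho)^{-1}$ coincide term by term. For~(2) the key point is that the kernel $\ln|x^2-y^2|$ of $T$ is symmetric, hence $T$ is self-adjoint with respect to the Lebesgue pairing $\jap{\cdot}$; expanding $(1-\varrho T)^{-1}$ in powers and transferring each factor $\varrho T$ across the pairing (where it becomes $T\varrho$) then yields $\jap{f(1-T\varrho)^{-1}g} = \jap{f\,\dr{g}}$.

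For~(3), my approach is to rewrite \eqref{eq:euler_lagrange} in exponential form
\begin{equation*}
\ln\varrho = -1 + \mu + V(ix) + V(-ix) - \ln|x| + T\varrho,
\end{equation*}
and differentiate first in $\mu$. This yields $(1-\varrho T)\partial_\mu\varrho = \varrho$, hence $\partial_\mu\varrho = (1-\varrho T)^{-1}\varrho = \varrho\dr{1} = \sigma$ by~(1), consistent with \eqref{eq:useful}. Applying $\partial_\odot$ and using the general rule $\partial_\odot (1-\varrho T)^{-1} = (1-\varrho T)^{-1}(\partial_\odot\varrho)\,T\,(1-\varrho T)^{-1}$, I obtain
\begin{equation*}
\partial_\odot\partial_\mu\varrho = (1-\varrho T)^{-1}(\partial_\odot\varrho)\bigl[1 + T(1-\varrho T)^{-1}\varrho\bigr].
\end{equation*}
By~(1) once more, $T(1-\varrho T)^{-1}\varrho = T\varrho(1-T\varrho)^{-1}$, and the telescoping identity $1 + T\varrho(1-T\varrho)^{-1} = (1-T\varrho)^{-1}$ reduces this to~(3). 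Property~(4) then follows by specializing to $\odot = \mu$ and pairing with $f$ via~(2): $\partial_\mu \jap{\sigma f} = \jap{(1-\varrho T)^{-1}[\sigma\dr{1}]\,f} = \jap{\sigma\dr{1}\dr{f}}$.

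Property~(5) is obtained by combining Corollary~\ref{cor:mean_cor_gibbs} with the identification \eqref{eq:free_energy_rel}, namely $\cF_{\textrm{Volt}}(\beta,V) = \mu(\beta,V)/2 - \ln 2 - 1/2$; differentiating in $t_n$ along the perturbation $V \mapsto V + (-1)^{n+1}it_n x^{2n}$ and evaluating at $t_n=0$ yields $q_n = -\tfrac{i}{2}\partial_{t_n}\mu$, i.e. $\partial_{t_n}\mu = 2iq_n$. Property~(6) is the standard derivative of an operator inverse applied to $A(t_n) = 1 - T\varrho\bigl(V + (-1)^{n+1}it_n\lambda^{2n}\bigr)$: the identity $\partial_{t_n}A^{-1} = A^{-1}(\partial_{t_n}A)A^{-1}$, evaluated at $t_n=0$ and rewritten via $A(0)^{-1}\psi = \dr{\psi}$ in the inner factor, gives exactly the stated formula.

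The main obstacle I anticipate is not algebraic but analytic: the Neumann-series manipulations in~(1), (2), and~(3) must be justified as genuine operator identities, not just formal ones. Concretely, I would need to fix a function space on which $\varrho T$ has spectrum avoiding $1$, so that $(1-\varrho T)^{-1}$ is well defined and the series converges. The polynomial confinement in $V$ controls $\varrho$ at infinity while the logarithmic kernel is locally integrable against $\varrho(w)\,dw$, so I expect a suitably weighted $L^2$-type space to suffice. Once this functional framework is fixed, the smooth dependence of the minimizer on $\mu$ and $t_n$ provided by Theorem~\ref{thm:LDP_antisym} legitimizes the differentiations, and the algebraic identities above go through unchanged.
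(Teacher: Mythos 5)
Your proposal is correct, and for items (1)--(4) and (6) it follows essentially the same formal manipulations as the paper: (1) and (2) are the same operator/adjoint identities (the paper verifies (1) by applying $(1-\varrho T)$ to both sides and proves (2) via $\jap{y(1-T\varrho)h}=\jap{h(1-\varrho T)y}$, which is exactly your Neumann-series/self-adjointness argument resummed); (3) in the paper is obtained by differentiating $(1-\varrho T)\sigma=\varrho$ and showing $1+T\sigma=(1-T\varrho)^{-1}[1]$, whereas you differentiate the resolvent formula $\partial_\mu\varrho=(1-\varrho T)^{-1}\varrho$ directly --- the two computations are interchangeable; (4) is identical. The one genuinely different step is (5): the paper differentiates the Euler--Lagrange equation \eqref{eq:euler_lagrange} in $t_n$ and tests against $\sigma$, using $\sigma/\varrho=1+T\sigma$ to cancel the nonlocal terms, while you read the identity off from \eqref{eq:free_energy_rel} together with the moment relation \eqref{eq:mom_rel}. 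Your route is shorter and arguably cleaner, at the price of relying on the earlier free-energy identification for the whole perturbed family of potentials (which does hold, since \eqref{eq:free_energy_rel} is derived for arbitrary admissible $V$). Two minor remarks: in (6) the resolvent-derivative formula should carry a minus sign, $\partial_{t_n}A^{-1}=-A^{-1}(\partial_{t_n}A)A^{-1}$; since $\partial_{t_n}A=-T\partial_{t_n}\varrho$ the two signs cancel and your stated conclusion is correct, but the intermediate identity as written is off by a sign. Finally, your closing paragraph on fixing a function space where $1-\varrho T$ is invertible addresses an analytic point that the paper leaves implicit; it is a welcome addition rather than a requirement for matching the paper's level of rigor.
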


{\color{black} Given the previous proposition, we can compute the matrices $C,B$.}
\subsection{The matrix $C$}

The aim of this section is to compute the correlation matrix $C$ defined as

\begin{equation}
	C_{m,n} = \sum_{j=1}^{2N} \cov{Q_j^{[n]}}{Q_0^{[m]}}\,.
\end{equation}

We start with $C_{0,0}$

\begin{equation}
	\label{eq:c00}
	\begin{split}
			C_{0,0} \stackrel{\text{Cor} \ref{cor:mean_cor_gibbs}}{=}&  -\partial_{\beta}^2\cF_{\textrm{Volt}}(\beta,V) \stackrel{\eqref{eq:free_energy_rel}}{=}  - \frac{1}{2}\partial_{\beta}^2\mu(\beta,V) = -\frac{1}{2}\partial_{\beta} \kappa \\&=-\frac{1}{2}\partial_{\beta} \frac{1}{\jap{\sigma}} = \frac{\kappa ^2}{2} \partial_{\beta} \jap{\sigma} = \frac{\kappa ^2}{2}\partial_{\beta} \mu \partial_{\mu} \jap{\sigma} \stackrel{\eqref{eq:prop4}}{=} \frac{\kappa^3}{2}\jap{\sigma(\dr{1})^2}
	\end{split}
\end{equation}

Next we consider $C_{0,n} = C_{n,0} $, from Corollary \ref{cor:mean_cor_gibbs} we deduce that

\begin{equation}
	\begin{split}
		C_{0,n} &= \partial_{\beta} \jap{\kappa \sigma w^{2n}} = (\partial_{\beta}\kappa)\jap{\sigma w^{2n}} + \kappa \partial_{\beta}\jap{\sigma w^{2n}}\\
		& \stackrel{\eqref{eq:c00} - \eqref{eq:prop4}}{=} -\kappa^3\jap{\sigma(\dr{1})^2}\jap{\sigma w^{2n}} + \kappa^2\jap{\sigma \dr{1}\dr{w^{2n}}} \\
		& = -\kappa^2 \jap{\sigma(\dr{1})^2}q_n  +\kappa^2\jap{\sigma \dr{1}\dr{w^{2n}}} =  \kappa^2 \jap{\sigma\dr{1}\left(\dr{w^{2n}} -q_n\dr{1}\right)}
	\end{split}
\end{equation}

Finally, we have to compute $C_{n,m} = C_{m,n}$, from Corollary \ref{cor:mean_cor_gibbs} we deduce that

\begin{equation}
	C_{n,m} = i\partial_{t_n}\jap{\kappa \sigma(V-it_nw^{2n}) w^{2m}}_{\vert_{t_n=0}} = i\frac{\partial_{t_n}\kappa_{\vert_{t_n=0}}}{\kappa}q_m + i\jap{\kappa w^{2m}\partial_{t_n}\sigma(V+ (-1)^{n+1} it_n w^{2n})_{\vert_{t_n=0}}} \,.
\end{equation}

We have $i\partial_{t_n}\kappa_{\vert_{t_n=0}} = -2C_{n,0}$. Regarding the second derivative we use the free energy
\begin{equation}
	\begin{split}
		\cF[\varrho] & = -\frac{1}{2}\int\int_{\R_+^2}\ln(\vert w^2-\lambda^2 \vert) \varrho(w)\varrho(\lambda)\di w \di \lambda +  \int_{\R_+}(-V(iw)  -V(-iw)+ 2it_nw^{2n})\varrho(w)  \\ & +\int_{\R_+}\ln(|w|)) \varrho(w)\di w  +\int_{\R_+}\ln(\varrho(w))) \varrho(w)\di w \,,
	\end{split}
\end{equation}
so that 
\begin{equation}
\begin{split}
	\dfrac{\delta \mathcal{F}}{\delta \varrho}&= -\int_{\R}\ln(\vert w^2-\lambda^2 \vert)\varrho(\lambda)\di \lambda  -(V(iw)+V(-iw)) + 2it_nw^{2n} +\ln(|w|) + \ln(\varrho) + 1 -  \mu(\beta,V)  \\
	& =  -T\varrho(w)  -(V(w)+V(-|w|)) + 2it_nw^{2n} +\ln(|w|) + \ln(\varrho) + 1 - \mu(\beta,V) = 0\,.
\end{split}
\end{equation}

Taking the derivative with respect to  $t_n$ we obtain  the equations
\begin{equation}
	\label{var_rhomu}
	2iw^{2n}-T\left(\dfrac{\partial}{\partial t_n}\varrho(w)\right)+\dfrac{\dfrac{\partial}{\partial t_n}\varrho(w)}{\varrho(w)}-\dfrac{\partial}{\partial t_n}\mu(\beta,V)=0\,,
\end{equation}
so that
\begin{equation}
	\label{eq:time_der}
	(1-\varrho T)\left(\dfrac{\partial}{\partial t_n}\varrho(w)\right)=\left(\dfrac{\partial}{\partial t_n}\mu- 2iw^{2n}\right)\varrho(w)\,.
\end{equation}
Now taking the derivative with respect to $\mu$ we obtain 
\[
(1-\varrho T)\left(\dfrac{\partial}{\partial t_n}\sigma(w)\right)-\sigma T\left(\dfrac{\partial}{\partial t_n}\varrho(w)\right)=\left(\dfrac{\partial}{\partial t_n}\mu-2i{w^{2n}}\right)\sigma(w)\,.
\]
Applying \eqref{eq:time_der}, we deduce
\[
(1-\varrho T)\left(\dfrac{\partial}{\partial t_n}\sigma(w)\right)=\frac{\sigma(w)}{\varrho(w)}(1-\varrho T)^{-1}\left(2iq_n-2i{w^{2n}}\right)\varrho(w)\,,
\]
therefore 
\begin{align}
	\langle\partial_{t_n}\sigma(V+t_nw^{2n}){w^{2m}}\rangle&=\langle {w^{2m}}(1-\varrho T)^{-1} \left( \frac{\sigma}{\varrho}(1-	\varrho T)^{-1}\left(2iq_n-2i{w^{2n}}\right)\varrho(w) \right) \rangle\\
	&=-2i\langle \sigma\dr{{w^{2m}}}\left(\dr{{w^{2n}}} - q_n[1]^{dr}\right)\rangle\,.
\end{align}

So, we show that

\begin{equation}
	C_{n,m} = 2\kappa \jap{\sigma\left(\dr{{w^{2m}}} - q_m[1]^{dr}\right) \left(\dr{{w^{2n}}} - q_n[1]^{dr}\right)}\,.
\end{equation}

\subsection{The matrix $B$}

The matrix $B$ is the matrix of static covariance between the conserved fields and currents, which is defined as

\begin{equation}
	B_{n,m} =\lim_{N\to \infty} \frac{1}{2N}\cov{Q^{[n]}}{J^{[m]}}\,.
\end{equation}

A priori, this matrix is not symmetric, but, as we show in this section, it is.

First, we start by computing $B_{n,0} = \lim_{N\to \infty}(2N)^{-1}\cov{Q^{[n]}}{J^{[0]}}$. From Remark \ref{rem:trick_rem}, we deduce that

\begin{equation}
	B_{n,0} = -\frac{1}{2}\lim_{N\to \infty}(2N)^{-1}\cov{Q^{[n]}}{Q^{[1]}} =-\frac{1}{2}C_{n,1}\,.
\end{equation}

To compute the remaining part of the matrix $B$, we need to express $\meanvalue{J_0^{[n]}}{1}$ \eqref{eq:local_current} using our new notation

\begin{equation}
	\begin{split}
			\meanvalue{J_0^{[n]}}{1} & = -\frac{1}{2}\int_0^\beta \partial^2_{t_1,t_2} \cF_{\textrm{Volt}}(y, V-it_1x^2 -it_2x^{2n})\di y \\ & 
			= -\frac{1}{4}\beta \partial_{t_1} \partial_{t_2} \fF_{\textrm{AG}}(\beta, V-it_1x^2 -it_2x^{2n}) = -\frac{1}{2}i \partial_{t_n}\jap{\varrho(V-it_nw^{2n}){w^2}} \\
			& \stackrel{\eqref{eq:time_der}}{=}  \jap{\varrho \dr{{w^2}} ({w^{2n}} - q_n)}
	\end{split}\,.
\end{equation}

Defining 

\begin{equation}
	\label{eq:v_eff}
	v_{\textrm{eff}} = \frac{\dr{{w^2}}}{\dr{1}}\,,
\end{equation}
we can recast the previous expression as

\begin{equation}
	\label{eq:my_current}
	\meanvalue{J_0^{[n]}}{1} =  \jap{\sigma v_{\textrm{eff}} ({w^{2n}} - q_n)} =  \jap{\sigma (v_{\textrm{eff}} - q_1) ({w^{2n}} - q_n)}\,.
\end{equation}
%

We can now compute $B_{n,m}$ as follows

\begin{equation}
	\begin{split}
		B_{n,m} & =- i\partial_{t_n}\meanvalue{J_0^{[m]}}{1} \stackrel{\eqref{eq:my_current}}{=} -i \partial_{t_n}\jap{\varrho (\dr{{w^2}} - q_1\dr{1}) ({w^{2m}} -q_m)}\\
		& \stackrel{\eqref{eq:prop6} - \eqref{eq:time_der}}{=}- 2\jap{\varrho ({w^{2m}} -q_m) \dr{{w^{2n}} - q_n} (\dr{{w^2}} - q_1\dr{1})} \\ & - 4\jap{T\varrho\dr{{w^{2m}} -q_m} \varrho\dr{{w^{2n}} - q_n}(\dr{{w^2}} - q_1\dr{1})}\\
		& \stackrel{\eqref{eq:dressing}}{=}  -2\jap{\varrho \dr{{w^{2m}} -q_m} \dr{{w^{2n}} - q_n} (\dr{{w^2}} - q_1\dr{1})} = - 2\jap{\sigma (v_{\textrm{eff}} - q_1) \dr{{w^{2m}} -q_m} \dr{{w^{2n}} - q_n}}
	\end{split}\,.
\end{equation}

From the previous equation, we deduce that $B$ is actually symmetric.

This concludes the proof of Theorem \ref{thm:main} \qed

\begin{remark}
\label{rem:collision}
	We notice that from our definition of $v_{\textrm{eff}}(w)$, one can deduce the following \textit{Collision rate ansatz} for the effective velocity
	
	\begin{equation}
		v_{\textrm{eff}} = w^2 + T\sigma[\ve] - \ve T\sigma[1]\,.
	\end{equation}
\end{remark}

\begin{proof}
	
	Multiplying the definition of $\ve$ by $\sigma$ we deduce that
	\begin{equation}
		\sigma \ve = (1-\varrho T)^{-1}[\varrho w^2]\,,
	\end{equation} 
	which also reads
	
	\begin{equation}
		(1-\varrho T)[\sigma \ve] = \varrho w^2\,.
	\end{equation}
	From equation \eqref{eq:useful}, we deduce that $\frac{\sigma}{\varrho} = 1+T\sigma[1]$ thus
	
	\begin{equation}
		\label{eq:collision}
		\ve(1+T\sigma[1]) = w^2 + T\sigma[\ve]\,,
	\end{equation}
	rearranging the previous equation we deduce our claim.
\end{proof}

For later computations, the basis of moments that we are considering while computing the matrices $B,C$ is not convenient. For this reason we introduce the space $\mathbb{C}\oplus L^2(\sigma, \{1\}^{\perp})$, where by  $L^2(\sigma, \{1\}^{\perp})$ we denote the space of square integrable function with respect to $\sigma$ such that they are orthogonal to the constant function. Defining the operator $\Xi$ and its adjoint $\Xi^*$ as

\begin{equation}
	\Xi \phi = \dr{\phi - \jap{\kappa \sigma \phi}}\,, \qquad \Xi^*\psi = (1-\varrho T)^{-1}\psi - \kappa\sigma \jap{\dr{1}\psi}\,.
\end{equation}

Using the notation that we have just introduced, we define the following matrix operators

\begin{equation}
	\label{eq:BC_operator}
	\begin{split}
			& \fC = \begin{pmatrix}
			\frac{\kappa^3}{2}\jap{\sigma (\dr{1})^2}  & \kappa\langle \Xi^* \kappa\sigma \dr{1}\vert   \\ 
			\kappa\vert \Xi^* \kappa\sigma \dr{1}\rangle  & 2\Xi^* \kappa\sigma \Xi
		\end{pmatrix} \,, \\ & 
		 \fB = -\frac{1}{\kappa} \begin{pmatrix}
			\frac{\kappa^3}{2} \jap{\sigma (\dr{1})^2(\ve - q_1)} & \kappa\langle \Xi^* \kappa\sigma(\ve - q_1) \dr{1}\vert   \\ 
			\kappa\vert \Xi^* \kappa\sigma(\ve - q_1) \dr{1}\rangle  &2 \Xi^* \kappa\sigma(\ve - q_1) \Xi
		\end{pmatrix}\,.
	\end{split}
\end{equation}
In this notation we can recast the matrices $B,C$ as

\begin{equation}
	C_{0,0} = \fC_{0,0}\,,\quad C_{n,0} = C_{n,0} = \fC_{0,1}[w^n]\,,\quad C_{m,n}= C_{n,m}=\langle w^m ;\fC_{1,1}[w^n] \rangle \,, 
\end{equation}
and analogously for $B$.

\section{Linearized Hydrodynamics}

\label{sec:linearhydro}
In this section, we compute the correlation functions of the Volterra lattice using the theory of Generalized Hydrodynamics. We start by computing the Euler equation for the density. We start from the continuity equation

\begin{equation}
	\label{eq:starting}
	\partial_t Q_j^{[n]} = -\partial_x J_j^{[n]}\,,
\end{equation}
which, by averaging on a GGE with slowly varying parameters, become

\begin{equation}
\label{eq:first_approx}
	\partial_t \jap{Q_j^{[n]}}_{\epsilon} = -\partial_x \jap{J_j^{[n]}}_{\epsilon}\,,
\end{equation}
where by $\jap{\cdot}_{\epsilon}$ we denote the parameter with slow variation, and by $\partial_x\cdot$ the discrete spatial derivative.

After a normalization procedure, which still needs some rigorous justification, the previous equations imply that the density $\sigma$ and the normalization $\kappa$ evolve according to the following system of quasi-linear equations

\begin{equation}
	\label{eq:linearzied_1}
\partial_t \kappa = \partial_x 	q_1\,,\qquad \partial_t(\kappa \sigma) + \partial_x((\ve - q_1)\sigma) = 0\,,
\end{equation}
at Euler scale.

As in the case of Toda lattice, the previous equations can be put in linear form by the following change of coordinates

\begin{equation}
	\varrho = \sigma (1+T\sigma)^{-1}[1]\,,
\end{equation}
in this new variable the equations read \eqref{eq:linearzied_1} 

\begin{equation}
	\kappa \partial_t \varrho + (\ve - q_1)\partial_x \varrho = 0\,,
\end{equation}
the proof is analogous to the one in \cite{Spohn2020}. {\color{black} From the previous expression, we notice that the solution of this system of equations can develop some shock, indeed  if the velocity $\ve(w) - q_1 < 0$ the corresponding wave will encouter some sort of obstacle at $x=0$, since the density is defined just for positive $x$.}
We notice that this behavior is an effect of the linearization procedure. We expect that if we would to consider a more accurate description of the model by making a second order average approximation of equation \eqref{eq:starting}, i.e. adding a damping term in the form of a Drude weight, the evolution would be smooth and the shock disappear. For a more general discussion, we refer to \cite[Chapter 12]{spohn2021hydrodynamicToda} and \cite{DeNardis2018}.

Despite that, we can still apply the theory of GHD (Landau-Lifshitz theory) to describe the correlation functions. For a general introduction see \cite[Chapter 7]{spohn2021hydrodynamicToda}. The structure of the matrices $B,C$ is the same as in \cite{Spohn2020,spohn2021hydrodynamicToda}, thus  following the exact same reasoning, {\color{black} we can guess the linear correlation functions has the following operator form

\begin{equation}	
	\label{eq:sol_linear}
	\tS(x,t)= \begin{pmatrix}
		\frac{\kappa^3}{2}\jap{\sigma \delta(x+t(\ve - q_1)\kappa^{-1})\left((\dr{1})^2 \right)} & \kappa \langle \Xi^*\kappa \sigma \delta(x+t(\ve - q_1)\kappa^{-1})\dr{1}\vert \\
		\kappa \vert  \Xi\kappa \sigma \delta(x+t(\ve - q_1)\kappa^{-1})\dr{1}\rangle & 2\Xi^*\sigma \kappa  \delta(x+t(\ve - q_1)\kappa^{-1}) \Xi 
	\end{pmatrix}\,.
\end{equation}
}
As we already noticed, these equations develops shock. Here this effect is clearer in view of the structure of the effective velocity $\ve(w)$, see Figure \ref{fig:v_eff}. Indeed, by looking at the collision rate ansatz \eqref{eq:collision}, one immediately deduces that the effective velocity is an even function, and it is not singular, thus it is not a one to one transform of $\R_+$ into $\R$. So equations \eqref{eq:sol_linear}  are not continuous for some values of $ \frac{x}{t} =\xi_0$ that we can compute explicitly as

\[ \xi_0 = - \frac{\ve(0) - q_1}{\kappa}\,.\]

 Intrigued by this behavior, we performed several numerical simulations to understand to which extent the linear approximation captures the  behavior of the correlation functions.

\begin{figure}[ht]
	\centering
	\includegraphics[scale=0.8]{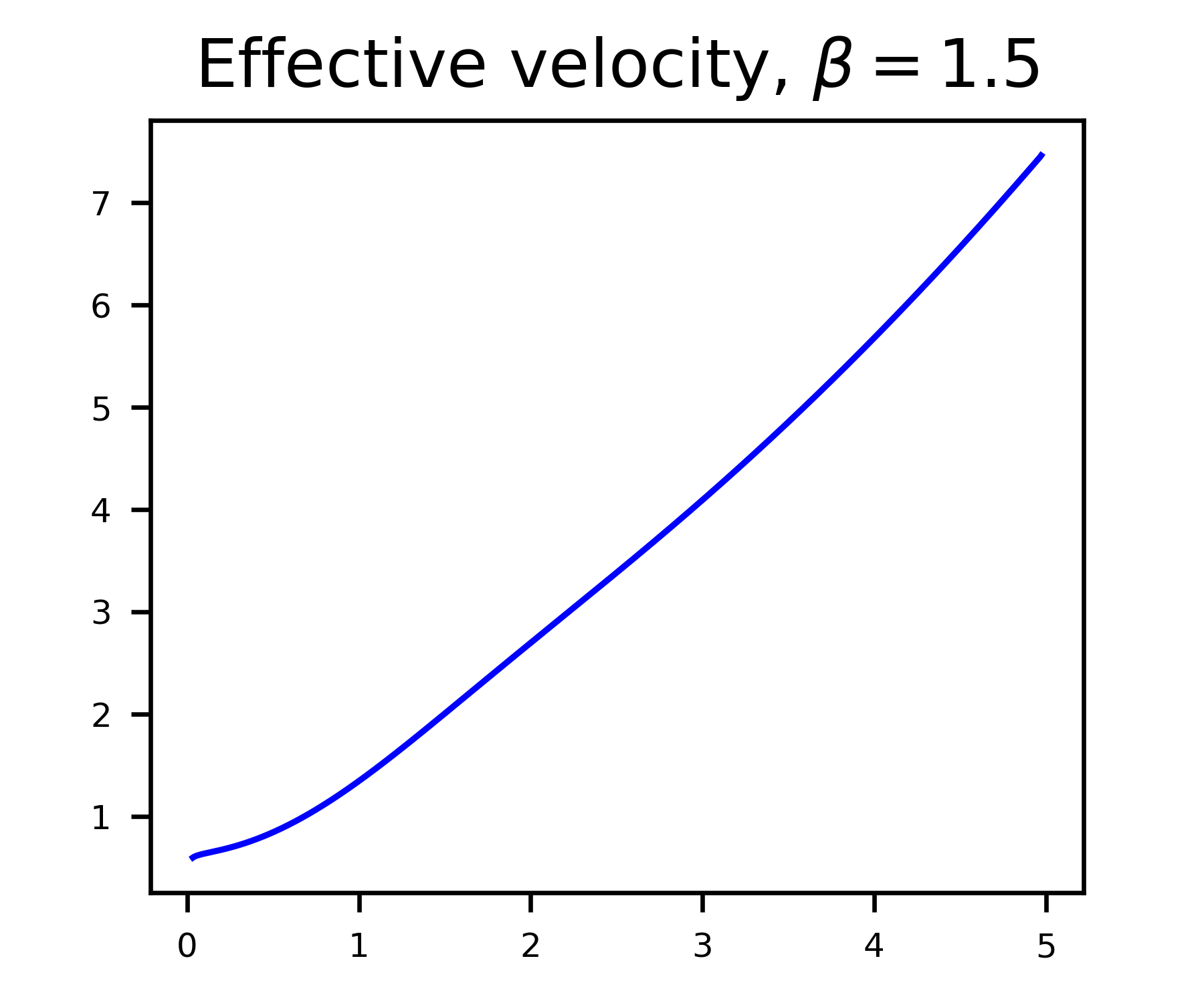}
	\caption{$\ve$ for $\beta=1.5$, $V(x) = \frac{x^2}{2}$}
	\label{fig:v_eff}
\end{figure}
\section{Numerical Results}
\label{sec:numerical_results}

In this section, we present the numerical results that we obtained, in the last part of this section we present the method that we used to numerically simulate both the classical correlation functions and the GHD predictions.

\subsection{Description of the results}
We compared the  GHD prediction of the correlation functions of the Volterra lattice with the molecular dynamics simulation (MD) for two different temperature corresponding to $\beta=1;1.5$, see figure \ref{fig:paragone1}.

In each of these cases, we have evaluated the GHD approximations (also called Landau-Lifshitz approximation) $\tS(x,t)$ \eqref{eq:sol_linear} of the correlators for all $0\leq n\leq m\leq 1$ using the numerical scheme that we describe in \ref{sec:linearized_GHD}. Their graphs are displayed in Figures \ref{fig:paragone1} as dashed black lines.
The colored lines represent the molecular dynamics simulations. According to the ballistic scaling predicted in \eqref{eq:sol_linear}, we plot $tS_{m,n}(j,t)$ as a function of $j/t$ for $t=200,400,600$. Here the values of $S_{m,n}(j,t)$ is approximated using the numerical scheme that we describe in section \ref{sec:md_simulations}. 

The agreement between the molecular dynamics simulation and the prediction of the GHD is astonishing for negative values of $\xi = \frac{x}{t}$, but for positive values of such parameter the GHD prediction does not capture the oscillation of the correlation functions. The main reason is that the relation $\xi = -\frac{\ve(w) - q_1}{\kappa_\beta}$ is not a bijection between $\R$ and $\R_+$, thus the prediction of the GHD develop a singularity at $\xi_0 = -\frac{\ve(0) - q_1}{\kappa_\beta}$ , which is exactly where the molecular dynamics simulations show an highly oscillatory behavior. {\color{black} Moreover, the point $\xi_0$ is also the minimum of $\ve(w)$.} For this reason, we believe that one has to consider some extra diffusive terms when approximating \eqref{eq:first_approx} in order to get a more precise description of the correlation functions for this model, as it is described in \cite{DeNardis2018}.  Specifically, we believe that at the point $\xi_0$ the diffusive effects are not a sub-leading correction to the transport dynamics, but they are of the same order.

{\color{black} We notice that there is also a small discrepancy on the right side of the plot between the MD simulation and the GHD prediction of  $S_{1,1}$ for $\beta=1.1$; this is not a real discrepancy, but just a finite size effect.}

\subsection{Numerical simulation}

This subsection is divided into two parts. In the first part we present the numerical scheme that we used to simulate the evolution of the Volterra lattice and to compute the correlation functions, i.e. the Molecular Dynamics simulations. In the second part, we present the numerical scheme that we used to compute the prediction of the Generalized Hydrodynamics.

\subsubsection{Molecular dynamics simulations}
\label{sec:md_simulations}

We approximate the expected value that is contained in the MD-definition of the correlations $S_{m,n}$ in equation~\eqref{eq:hamvoltN} by a standard Rounge--Kutta method (RK45), whose implementation program is written in \texttt{Python}, and can be found at \cite{Volterra_cor_software}.
First, we generate the random initial conditions distributed according to the Gibbs measure, as given by \eqref{eq:VolterraGibbs} for the i.i.d.~random variables $(a_j)_{j=1}^{2N}$, which are distributed according to a scaled $\chi^2$ random variable. We generate this random vector with \texttt{Numpy v1.23}'s native function \texttt{random.default{\textunderscore}rng().chisquare} \cite{Numpy}.
Having chosen the initial conditions in such a manner, we solve equation \eqref{eq:hamvoltN}.

For the evolution, we use a standard Rounge-Kutta algorithm of order 5 (RK45), we decided not to use the native \texttt{Scipy v1.12.0}'s algorithm \cite{scipy}, but we implemented it, in this way we could used the library \texttt{Numba} \cite{numba} to speed up the computations.

Our approximation for the expectation $S_{m,n}$ is then extracted from $10^6$ trials with independent initial conditions. Here we take the empirical mean of all trials where for each trial we also take the mean of the $N=3000$ sets of data that we generated by choosing each site on the ring for $j=0$.

We want to mention that almost all the pictures that appeared in this paper are made using the \texttt{Python} library \texttt{matplotlib} \cite{Matplotlib}.

\subsubsection{Solving linearized GHD}
\label{sec:linearized_GHD}

To numerically solve the linearized GHD equations, we use a numerical method similar to the one from \cite{Mendl2022,mazzuca2023equilibrium}. First, Eq.~\eqref{uLdplus1} is expressed in terms of Whittaker function $W_{\mu,\kappa}(z)$ \cite{dlmf}, which is readily available in \texttt{Mathematica} \cite{Mathematica}. This provides the solution to minimization problem   \eqref{eq:functional}.

Then, we use a simple finite element discretization of the $w$-dependent functions by hat functions, resulting in piece-wise linear functions on a uniform grid. After precomputing the integral operator $T$ in \eqref{eq:def_T} for such hat functions, the dressing transformation \eqref{eq:dressing} becomes a linear system of equations, which can be solved numerically. This procedure yields $[w^2]^\mathrm{dr}$, and subsequently $\sigma$ via \eqref{eq:useful} and $v_\mathrm{eff}$ via \eqref{eq:v_eff}. 

To evaluate the correlation functions in \eqref{eq:correlation}, we notice  that the delta-function in the integrand results in a parametrized curve, with the first coordinate (corresponding to $x/t$) equal to $-\frac{\ve(w) - q_1}{\kappa}$ from \eqref{eq:correlation}, and the second coordinate equal to the remaining terms in the integrand divided by the Jacobi factor $\vert \frac{\mathrm{d}}{\mathrm{d}w} \ve(w)\vert $ resulting from the delta-function.

	      {\color{black}
        \section{Conclusions and Outlooks}
        
        In this paper, we computed the GHD approximation of the correlation functions for the Volterra lattice. We were able to do so by connecting the GGE of the Volterra lattice with the classical Anti-symmetric Gaussian $\beta$ ensemble. Given the explicit expression of the GHD prediction, we noticed that it has a discontinuity for $\frac{x}{t} = \xi_0$, see \eqref{eq:correlation}. This shock is due to the fact that the effective velocity $\ve$ defined by the collision rate ansatz, see Remark \ref{rem:collision}, is an even smooth function, therefore it has a stationary point at $0$, which, in this case, is also a minimum. It would be interesting to understand if this point has some deep physical interpretation. This particular feature was already observed in \cite{Shock1} for quantum systems in the context of density ripples. We mention that in \cite{Whitham1}, the authors describe a similar phenomena using Whitham modulation theory, and it might be possible that one can apply the same idea in this context.
        
        Furthermore, we compared our findings with several numerical experiment, see Figure \ref{fig:paragone1}. We notice that the numerical correlation functions and the GHD prediction line up away from the discontinuity point $\xi_0$, but around it the actual correlations are highly oscillatory.  In order to be able to describe such region analytically, we think that it would be interesting to conduct some other experiment focusing on the $t$-dependence of the shock region.         
      
 Since the Volterra lattice is also the discretization of the KdV equation, it could be interesting to understand if one can recover the soliton gas description of the KdV obtained in \cite{Bonnemain2022} using the result in our paper.

        \appendix
        
        \section{Technical Results}
	\label{app:tech_res}
	In this appendix we prove all the technical results that we used in our paper.
	
	\subsection{Proof of Lemma \ref{lem:mean_currents}}        
	
	For the sake of the reader, we report the statement of the Lemma
        \begin{lemma}
    
            Consider the Volterra lattice \eqref{Volterra} endowed with the GGE \eqref{eq:VolterraGibbs}, and define the currents $J^{[n]}$ as in \eqref{eq:currents}, then for all fixed $n\in\N$

            \begin{equation}
            	\label{eq:local_current_app}
                \lim_{N\to\infty} \frac{1}{2N}\meanvalue{J^{[n]}}{1} = -\frac{1}{2}\int_0^\beta \partial_{t_1}\partial_{t_2} \cF_{\textrm{Volt}}(y, V + it_1x^2 + (-1)^{n+1} it_2x^{2n})\di y\,,
            \end{equation}
            where $\cF_{\textrm{Volt}}(\beta,V)$ is the free energy \eqref{eq:free_energy_volterra}.
        \end{lemma}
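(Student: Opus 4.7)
My plan is to adapt to the Volterra lattice the approach used by Spohn for the Toda lattice in \cite{Spohn2020}, following the formalization in \cite{Mazzuca2024CLT}. The first observation is that by periodicity and the translation invariance of the GGE \eqref{eq:VolterraGibbs}, one has $\frac{1}{2N}\meanvalue{J^{[n]}}{1} = \meanvalue{J_0^{[n]}}{1}$, so the task reduces to computing the mean of a single local current, whose explicit form is given by \eqref{eq:currents} in terms of the entries of $L^{2n}$ and the weights $\sqrt{a_j a_{j+1}}$.

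The core of the argument is an interpolation in the inverse temperature. The plan is to establish the differential identity
\[
\partial_y \meanvalue{J_0^{[n]}}{1}_{(y,V)} = -\frac{1}{2}\cdot \frac{1}{2N}\cov{Q^{[1]}}{Q^{[n]}}_{(y,V)} + o(1)\,,
\]
as $N\to\infty$, and then integrate from $0$ to $\beta$. Given this identity, Corollary \ref{cor:mean_cor_gibbs} immediately identifies the integrand with the mixed second derivative of the perturbed free energy appearing on the right-hand side of the Lemma (up to signs, which match because $(-1)^{1+1}=1$ and the coefficient of $x^{2n}$ in the perturbed potential comes with $(-1)^{n+1}$).

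The derivation of the differential identity is where the real work lies. A direct differentiation of the Gibbs weight produces $\cov{J_0^{[n]}}{Q^{[0]}}$, since $Q^{[0]} = \frac{1}{2}\sum_j \ln a_j$ is the variable conjugate to $\beta$. The task is thus to transform this covariance into one between the charges $Q^{[1]}$ and $Q^{[n]}$. This rewriting relies on the bi-Hamiltonian nature of the Volterra lattice and the symmetry of mixed currents $J^{[n,1]} = J^{[1,n]}$ associated with the $H_1$- and $H_n$-flows, combined with the invariance of the GGE under every Hamiltonian flow generated by a conserved charge. Concretely, one introduces a doubly perturbed potential $V + it_1 x^2 + (-1)^{n+1} it_2 x^{2n}$ and manipulates the mixed derivative of the resulting partition function, reorganizing the Lax-matrix bilinear that defines $J_0^{[n]}$ into the target covariance.

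Finally, to close the integration one must handle the boundary term at $y \to 0^+$. This is delicate because the Gibbs weight $\prod a_j^{y/2 - 1}$ is not integrable at $y = 0$. I would address this either by starting the interpolation at a small $y_0 > 0$ and showing that the contribution vanishes after taking $N\to\infty$ followed by $y_0\to 0^+$, or by arguing from the relation $\cF_{\textrm{Volt}} = \partial_\beta(\beta \fF_{\textrm{AG}})$ in Proposition \ref{cor:mean_cor_anti} that the combination $\beta\,\fF_{\textrm{AG}}$ has a controlled extension across $\beta = 0$. I expect the identification of $\cov{J_0^{[n]}}{Q^{[0]}}$ with the target covariance to be the main technical obstacle, as it requires careful algebraic manipulation of $L^{2n}$-entries and a precise accounting of boundary contributions from the periodic finite volume that must be shown to be subleading as $N\to\infty$.
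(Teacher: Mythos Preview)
Your strategy is the paper's: reduce to $\meanvalue{J_1^{[n]}}{1}$ by cyclicity, differentiate in $\beta$ to produce $\cov{J_1^{[n]}}{Q^{[0]}}$, convert this into $-\tfrac12\cov{Q^{[n]}}{Q^{[1]}}$ via Remark~\ref{rem:trick_rem}, and integrate using Corollary~\ref{cor:mean_cor_gibbs}. The only place your sketch is imprecise is the conversion step. You invoke bi-Hamiltonian structure and an abstract mixed-current symmetry $J^{[n,1]}=J^{[1,n]}$, but the paper's mechanism is more elementary and does not need a second Poisson bracket: starting from the single continuity equation $\partial_t Q_j^{[n]}=J_j^{[n]}-J_{j-1}^{[n]}$, time-translation invariance of the GGE, and spatial cyclicity, one shows that the discrete $j$-gradient of $\cov{Q_1^{[n]}}{J_{2N-j+2}^{[m]}}-\cov{Q_1^{[m]}}{J_j^{[n]}}$ vanishes identically at $t=0$; the difference is therefore constant in $j$, and Corollary~\ref{cor:decay} (exponential decay of correlations for local observables) forces it to zero as $N\to\infty$. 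That decay result is the piece you are missing when you speak of ``boundary contributions from the periodic finite volume that must be shown to be subleading.'' As for the boundary term at $\beta\to 0$, the paper simply asserts $\lim_{\beta\to 0}\meanvalue{J_1^{[n]}}{1}=0$ without further argument, so your caution there is reasonable but not something the paper itself elaborates.
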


        To prove this lemma, we need a corollary of result from \cite{Mazzuca2024CLT} about the exponential decay of spatial correlation functions of \textit{local function}, which are functions on the phase space $\R^{2N}_+$ depending on a finite number of consecutive variables. To formally introduce this idea, we need some definitions.
        
        Given a differentiable function $F \colon \R^{2N}_+ \to \C$, we define its {\em support} as the set 
        \begin{equation}
        	\label{def:supp}
        	{\rm supp }\, F := \left\{ \ell \in \{ 1, \ldots, 2N\} \colon \ \ \   \frac{\partial F}{\partial a_\ell}\notequiv 0 \right\}
        \end{equation}
        and its  {\em diameter}  as 
        \begin{equation}
        	\label{diameter}
        	{\rm diam} \left({\rm supp }\, F\right) := \sup_{i, j \in {\rm supp}\, F} \td_{2N}(i,j) + 1 , 
        \end{equation}
        where $\td_{k(i,j)}$ is the  {\em periodic distance}  
        \begin{equation}
        	\label{p.dist}
        	\td_{k(i,j)} := \min \left( |i-j|, \ k - |i-j| \right) . 
        \end{equation}
        Note that $0\leq \td_{2N}(i,j) \leq N$.
        
        We say that a function $F$ is \textit{local} if ${\rm diam} \left({\rm supp }\, F\right)$ is uniformly bounded in $N$, i.e. there exists a constant $\tc\in \N$ such that ${\rm diam} \left({\rm supp }\, F\right) \leq \tc$, and $\tc$ is independent of $N$. 
        
       Another important class of functions that we consider are the so-called \textit{cyclic} functions, which are a class of function invariant under left or right shift of the variables. More specifically, for any $\ell \in \Z$,  and $\bx=(x_1, x_2, \ldots, x_{2N})\in \mathbb{R}^{2N}_+$  we define the {\em cyclic shift of order $\ell$} as the map  
       \begin{equation}
       	\label{shift}
       	S_\ell \colon \R^{2N} \to \R^{2N}, \qquad (S_\ell x)_j := x_{((j+\ell-1)\mod 2N) +1} . 
       \end{equation}
       For example $S_1$ and $S_{-1}$ are  the left respectively right shifts:
       $$
       S_1(x_1, x_2, \ldots, x_{2N}) := (x_2, \ldots, x_{2N},  x_{1}), \qquad
       S_{-1}(x_1, x_2, \ldots, x_{2N}) := (x_{2N}, x_1, \ldots, x_{2N-1}).
       $$
       One can immediately  check that for  any $\ell, \ell' \in \Z$:
       \begin{equation}
       	\label{prop:shift}
       	S_{\ell} \circ S_{\ell'} = S_{\ell + \ell'}, \qquad S_{\ell}^{-1} = S_{- \ell} , \qquad S_0 = \uno , \qquad
       	S_{\ell + 2N } = S_{\ell} . 
       \end{equation}
       Consider now a  function $H\colon \R^{2N}_+ \to \C$;  we denote  by $S_\ell H\colon \R^{2N}_+ \to \C$  the function \begin{equation}
       		\label{cyc.func}
       		(S_\ell H)(\ba) := H(S_\ell \ba) , \qquad \forall  \ba \in \R^{2N}_+\,.
       	\end{equation}
       	Clearly $S_\ell$ is a linear operator.
       	We can now define cyclic functions:
       	\begin{definition}[Cyclic functions]
       		\label{def:cyclic}
       		A function $H\colon \R^{2N}_+ \to \C$ is called {\em cyclic} if $S_1 H = H$.
       	\end{definition}
       	It is easy to construct cyclic functions as follows: given a function $h \colon \R^{2N}_+ \to \C$  we define the  new function $H$ by
       	\begin{equation}
       		\label{seed}
       		H(\ba)  := \sum_{\ell = 0}^{2N-1} (S_{\ell} h)(\ba) .
       	\end{equation}
       	$H$ is clearly  cyclic and  we say  that  $H$ is {\em generated } by $h$, we remark that these definition were introduced in this context in \cite{Giorgilli2014,Grava2020}. 
              	
          \begin{remark}
          	\label{rem:cyclic_prop}
          According to the previous definition, the conserved field $Q^{[n]}$  and the currents $J^{[n]}$ of the Volterra lattice are cyclic; furthermore, their seed are local functions.  Given these properties, we call these seeds $Q^{[n]}_j$ \textit{local conserved fields} and $J^{[n]}_j$\textit{local currents} .
          \end{remark}

       	Given these definitions, we can state the following corollary:
        
\begin{corollary} [Decay of correlations]
    \label{cor:decay}
        Consider the Volterra lattice \eqref{Volterra} endowed with the GGE \eqref{eq:VolterraGibbs}, and let $I,J : \R^{2N}_+ \to \R $ be two local functions with the same support of diameter $k$. Assume that they are integrable with respect to the GGE \eqref{eq:VolterraGibbs}. Write $2N=kM+\ell$, and let $j\in \{1,\ldots,M\}$. Then, there exists some $0<\mu<1$ such that
    $$ \E_{1}\left[I(\ba)S_jJ(\ba)\right]-\E_{1}\left[I(\ba)\right]\E_{1}\left[S_jJ(\ba)\right]=O(\mu^{\td_M(j,0)})\,.$$
\end{corollary}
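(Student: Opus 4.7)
The plan is to realize the GGE \eqref{eq:VolterraGibbs} as a one-dimensional Gibbs measure with \emph{finite-range} interaction on a suitable block of consecutive variables, and then invoke a standard transfer-operator (Perron--Frobenius) argument. This is the strategy implicit in \cite{Mazzuca2024CLT}; the corollary is essentially a repackaging of a general decay-of-correlations theorem established there, and the purpose of the block-of-diameter-$k$ reformulation is precisely to cover an arbitrary pair of local observables.

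First, I would observe that the weight $\trace{V(L(\ba))}$ is itself \emph{local of bounded range}: since $L$ is tridiagonal, $L^{2p}$ is a $(2p)$-banded matrix, so $\trace{L^{2p}}$ decomposes as a cyclic sum of functions each depending on at most $2p+1$ consecutive $a_j$. Hence there exists $r=r(V)\in\N$, independent of $N$, such that
\begin{equation}
\trace{V(L(\ba))} \;=\; \sum_{j=1}^{2N} h_j(a_{j-r},\ldots,a_{j+r}),
\end{equation}
and, combined with the product factor $\prod_j a_j^{\beta/2-1}\mathbbm{1}_{a_j>0}$, the GGE is a one-dimensional Gibbs measure with finite-range interaction on $\R^{2N}_+$ and periodic boundary conditions.

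Second, I would choose the block size $K=\max(k,r)$ and group the $2N$ sites into consecutive blocks $\bB_1,\dots,\bB_{\wt M}\in\R^{K}_+$ (with a controllable remainder absorbed in one adjacent block). In the block variables the interaction is nearest-neighbor, and both $I$ and $S_j J$ are supported on at most two consecutive blocks, one of which is at block-periodic distance comparable to $\td_M(j,0)$ from the other. The finite-range one-dimensional Gibbs measure on $\bB_1,\ldots,\bB_{\wt M}$ is then governed by a transfer operator $\cT$ acting on $L^2$-type spaces over the single-block space $\R^{K}_+$. Standard Perron--Frobenius theory gives a simple dominant eigenvalue $\lambda_0>0$ with strictly positive eigenfunction, and a spectral gap $|\lambda_1|/\lambda_0=:\mu<1$, so that for any two block-local observables $f,g$
\begin{equation}
\E_1[f\cdot S_j g]-\E_1[f]\E_1[S_j g] \;=\; O\!\left(\mu^{\td_M(j,0)}\right),
\end{equation}
uniformly in $N$. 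Applying this with $f=I$ and $g=J$ gives the claim.

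The main obstacle is establishing the spectral gap of $\cT$ on the non-compact single-block domain $\R^{K}_+$. One needs to verify that the kernel of $\cT$ (which contains the Gibbs weight with polynomial potential of positive leading coefficient $c_\ell$ together with the $\chi^2$-type on-site measure) defines a Hilbert--Schmidt or at least quasi-compact operator with positive, strictly positive kernel; then Perron--Frobenius forces simplicity of $\lambda_0$ and a genuine gap. The polynomial growth of $V$ at infinity ensures that the relevant integrals converge and that the dominant eigenfunction decays fast enough at $\infty$, and the uniformity of $\mu$ in $N$ follows because $\cT$ itself does not depend on $N$. These technical points are precisely those handled in \cite{Mazzuca2024CLT}; once available, the block reduction above packages them into the statement of Corollary~\ref{cor:decay}.
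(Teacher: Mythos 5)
The paper itself contains no proof of this corollary: it is stated as a direct consequence of the decay-of-correlations results of \cite{Mazzuca2024CLT} and is used as a black box in the proof of Lemma \ref{lem:mean_currents}. Your transfer-operator sketch is therefore not competing with an argument inside this paper; it is, however, the standard route to such statements and is in substance the strategy behind the cited reference: use that $L$ is banded so that $\mathrm{Tr}(V(L))$ is a cyclic sum of finite-range local terms, regroup sites into blocks so the interaction becomes nearest-neighbor, and extract exponential decay of correlations from the spectral gap of the associated transfer operator on the single-block space.

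Two points in your sketch need more care. First, the corollary as stated only assumes $I,J$ integrable with respect to the GGE, whereas a Perron--Frobenius/spectral-gap argument in an $L^2$-type (or weighted) space requires the observables to belong to that space; with mere $L^1$ integrability the argument does not close, and the implied constant in $O(\mu^{\td_M(j,0)})$ must in any case depend on suitable norms of $I$ and $J$. You should either strengthen the hypothesis (e.g.\ square-integrability or polynomial boundedness, which is all that is needed for the applications in this paper, where $I,J$ are the local fields and currents) or state the dependence of the constant explicitly. Second, because the boundary conditions are periodic, the covariance is a ratio of quantities built from $\mathcal{T}^{d}$ and $\mathcal{T}^{\wt M-d}$ around the whole ring, so uniformity in $N$ requires controlling the subleading spectrum in both factors, not just the existence of a gap; quasi-compactness together with simplicity of the leading eigenvalue does the job, but this step should be made explicit. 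Neither issue is fatal --- both are precisely the technical content delegated to \cite{Mazzuca2024CLT} --- but as written your proof proposal proves a slightly different (norm-dependent) statement than the one formally claimed.
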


With the previous corollary, we can prove Lemma \ref{lem:mean_currents} 
     \begin{proof}[Proof of Lemma \ref{lem:mean_currents}]

         First, we notice that in view of the cyclic property of the total currents

         \begin{equation}
             \lim_{N\to\infty}\frac{1}{2N}\meanvalue{J^{[n]}}{1} = \meanvalue{J^{[n]}_1}{1}\,. 
         \end{equation}
         Moreover, we have the following chain of equality

         \begin{equation}
         \label{eq:partial_beta}
             \partial_\beta \meanvalue{J^{[n]}_1}{1} = \cov{J_1^{[n]}}{Q^{[0]}} = \sum_{j=1}^{2N} \cov{J_1^{[n]}}{Q_j^{[0]}}\,.
         \end{equation}
         Assume that the following equality holds

         \begin{equation}
         \label{eq:exchange}
            \lim_{N\to\infty} \cov{J_1^{[n]}}{Q^{[m]}_1} = \lim_{N\to\infty} \cov{Q^{[n]}_1}{J_{2N-j+2}^{[m]}}\,,
         \end{equation}
         then we can recast \eqref{eq:partial_beta} as

         \begin{equation}
             \partial_\beta \meanvalue{J^{[n]}_1}{1} = \sum_{j=1}^{2N} \cov{Q_1^{[n]}}{J_{2N-j+2}^{[0]}} \stackrel{\text{Remark} \ref{rem:trick_rem}}{=} -\frac{1}{2} \sum_{j=1}^{2N} \cov{Q_1^{[n]}}{Q_j^{[1]}}\,.
         \end{equation}
        Furthermore, we notice that $\lim_{\beta \to 0}\meanvalue{J^{[n]}_1}{1} = 0$, thus, applying Corollary \ref{cor:mean_cor_gibbs}, we deduce the following

        \begin{equation}
            \lim_{N\to\infty} \frac{1}{2N}\meanvalue{J^{[n]}}{1} =-\frac{1}{2} \int_0^\beta \partial_{t_1} \partial_{t_2} \cF_{\textrm{Volt}}(x, V+it_1x^2 + (-1)^{n+1}it_2x^{2n})\di x\,.
        \end{equation}

         So, we have just to show that \eqref{eq:exchange} holds. Consider the following chain of equality

         \begin{equation}
             \begin{split}
                 \cov{J_{j-1}^{[n]}(t) -J_{j}^{[n]}(t)}{Q_{1}^{[m]}(0)} & = -\frac{\di}{\di t} \cov{Q_{j}^{[n]}(t)}{Q_{1}^{[m]}(0)} \\
                 & = -\frac{\di}{\di t} \cov{Q_{j}^{[n]}(0)}{Q_{1}^{[m]}(-t)} = -\frac{\di}{\di t} \cov{Q_{1}^{[n]}(0)}{Q_{2N-j+2}^{[m]}(-t)}\\
                 & = \ \cov{Q_1^{[n]}(0)}{J_{2N-j+2}^{[m]}(-t) - J_{2N-j+1}^{[m]}(-t) }\,.
             \end{split}
         \end{equation}
     Setting $\partial_jf(j) = f(j) - f(j-1)$, we proved that

     \begin{equation}
         \partial_j\left(\cov{Q_1^{[n]}(0)}{J_{2N-j+2}^{[m]}(0)} - \cov{Q_1^{[m]}(0)}{J_{j}^{[n]}(0)}\right) = 0\,.
     \end{equation}
     Thus, $\cov{Q_1^{[n]}(0)}{J_{2N-j+2}^{[m]}(0)} - \cov{Q_1^{[m]}(0)}{J_{j}^{[m]}(0)}$ is independent of $j$, but all the function involved are local function, so we can apply Corollary \ref{cor:decay} to show \eqref{eq:exchange} holds. So we conclude.
     \end{proof}
     
     \subsection{Proof of Proposition \ref{prop:dress_prop}}
     
     For the sake of the reader, we rewrite the statement that we want to prove
     
     \begin{proposition}
	Consider the measure $\varrho$ defined as the unique minizier of \eqref{eq:var_no_beta}, the operator $T$ defined in \eqref{eq:def_T} and the dressing operator \eqref{eq:dressing}. Then the following holds true
	
	\begin{enumerate}
		\item for any function $f$
		\begin{equation}
			(1-\varrho T)^{-1}[\varrho f] = \varrho(1- T\varrho)^{-1}[f]\,,
		\end{equation}
		
		\item  for any function $f,g$ 
		\begin{equation}
		\label{eq:prop2_app}
			\jap{(1-\varrho T )^{-1}[f] g} = \jap{f \dr{g}}\,,
		\end{equation}
		\item for any variable $\odot$
		
		\begin{equation}
		\label{eq:prop3_app}
			\partial_\odot \partial_\mu \varrho = (1-\varrho T)^{-1} \partial_\odot \varrho (1- T\varrho)^{-1}[1]
		\end{equation}
		
		\item for any function $f$
		
		\begin{equation}
			\partial_\mu\jap{\sigma f} = \jap{\sigma \dr{1}\dr{f}}\,.
		\end{equation}
		
		\item Consider the perturbed potential $V(x) \to V(x) +(-1)^{n+1}it_nx^{2n}$ then 
		
		 \begin{equation}
			\partial_{t_n} \mu(\beta, V(x) + (-1)^{n+1} it_n x^{2n})_{\vert_{t_n=0}} = 2iq_n\,.
		\end{equation} 
		
		\item For any function $\psi$
		
		\begin{equation}
			\partial_{t_n}\left( (1-T\varrho(V+ (-1)^{n+1}it_n\lambda^{2n}))^{-1}(\psi)\right)_{\vert_{t_n=0}} = (1-T\varrho)^{-1}\left(T\partial_{t_n}\varrho(V+ (-1)^{n+1}it_n\lambda^{2n})_{\vert_{t_n=0}}\dr{\psi}\right)\,.
		\end{equation}
	\end{enumerate}
\end{proposition}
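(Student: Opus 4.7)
The six claims all follow from three elementary ingredients: the algebraic identity $(\varrho T)^k \varrho = \varrho (T\varrho)^k$ (where $\varrho$ denotes multiplication and $T$ is the integral operator), the Lebesgue self-adjointness of $T$ (whose kernel $\ln|w^2-z^2|$ is symmetric in $w,z$), and differentiation of the Euler--Lagrange equation
\[
-T\varrho(w) - V(iw) - V(-iw) + \ln|w| + \ln\varrho(w) + 1 - \mu = 0.
\]
I will dispose of the items in order, since each later claim uses the earlier ones.

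For (1), expand both sides as Neumann series and use $(\varrho T)^k\varrho f = \varrho (T\varrho)^k f$, proved by induction on $k$. For (2), self-adjointness of $T$ gives $(\varrho T)^* = T\varrho$ on the Lebesgue inner product, hence $((1-\varrho T)^{-1})^* = (1-T\varrho)^{-1}$ and the identity is the definition of the adjoint pairing. For (3), differentiate the Euler--Lagrange relation with respect to $\mu$ to get $(1-\varrho T)\partial_\mu\varrho = \varrho$, which identifies $\partial_\mu\varrho = \sigma$. Differentiating once more with respect to $\odot$ and using the product rule $\partial_\odot(1-\varrho T) = -\partial_\odot\varrho\,T$ yields
\[
(1-\varrho T)\partial_\odot\partial_\mu\varrho = \partial_\odot\varrho\bigl(1 + T\partial_\mu\varrho\bigr) = \partial_\odot\varrho\,(1-T\varrho)^{-1}[1],
\]
after noting $1+T\sigma = [1]^{\textrm{dr}}$ from $[1]^{\textrm{dr}} = 1 + T\varrho[1]^{\textrm{dr}}$ and $\sigma=\varrho[1]^{\textrm{dr}}$.

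For (4), combine (3) with $\odot=\mu$ and (2): using $\partial_\mu\langle\sigma f\rangle = \langle\partial_\mu^2\varrho\,f\rangle = \langle (1-\varrho T)^{-1}[\sigma\dr{1}]\,f\rangle$, and then applying (2) with arguments $\sigma\dr{1}$ and $f$. For (6), this is the standard resolvent derivative: $\partial_{t_n}(A^{-1}\psi) = -A^{-1}(\partial_{t_n}A)A^{-1}\psi$ with $A = 1-T\varrho(V+(-1)^{n+1}it_n\lambda^{2n})$, so $\partial_{t_n}A = -T\partial_{t_n}\varrho$, and using $A^{-1}\psi = \dr{\psi}$ at $t_n=0$.

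The main point is (5). Here I perturb $V \mapsto V + (-1)^{n+1}it_n x^{2n}$ in the Euler--Lagrange equation. Since $(iw)^{2n} = (-iw)^{2n} = (-1)^n w^{2n}$, the combined contribution $-V(iw) - V(-iw)$ picks up the extra term $-2(-1)^{n+1}(-1)^n it_n w^{2n} = 2it_n w^{2n}$. Differentiating at $t_n=0$ gives
\[
(1-\varrho T)\partial_{t_n}\varrho = \bigl(\partial_{t_n}\mu - 2iw^{2n}\bigr)\varrho,
\]
so, by (1), $\partial_{t_n}\varrho = \varrho\bigl(\partial_{t_n}\mu\,\dr{1} - 2i\,\dr{w^{2n}}\bigr)$. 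The constraint $\int \varrho\,\di w = \beta$ is preserved under the perturbation, so $\langle \partial_{t_n}\varrho\rangle = 0$. Integrating then gives
\[
0 = \partial_{t_n}\mu\,\langle\sigma\rangle - 2i\langle\varrho\,\dr{w^{2n}}\rangle.
\]
Using self-adjointness of $T$ one has $\langle\varrho\,\dr{w^{2n}}\rangle = \langle(1-\varrho T)^{-1}\varrho\cdot w^{2n}\rangle = \langle\sigma\,w^{2n}\rangle = q_n/\kappa$, and $\langle\sigma\rangle = 1/\kappa$, yielding $\partial_{t_n}\mu = 2iq_n$.

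The subtlest point is really (5), because one must track the imaginary perturbation through the $V(\pm iw)$ substitution carefully and then use precisely the right pairings to eliminate $\partial_{t_n}\varrho$; after (3) and (2) are established, the remaining items are essentially formal bookkeeping.
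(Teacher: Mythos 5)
Your proof is correct and follows essentially the same route as the paper's: items (1), (2), (6) by the same algebraic/adjoint/resolvent identities, (3)--(4) by differentiating $(1-\varrho T)\partial_\mu\varrho=\varrho$, and (5) from the differentiated Euler--Lagrange equation together with mass conservation $\jap{\partial_{t_n}\varrho}=0$. The only cosmetic difference is in (5), where you invert $(1-\varrho T)$ via item (1) and then integrate, whereas the paper tests the differentiated Euler--Lagrange equation directly against $\sigma$ --- two equivalent readings of the same computation.
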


\begin{proof}
	(1) It is equivalent to prove that 
	\begin{equation}
		\varrho f = (1-\varrho T)\left[\varrho(1- T\varrho)^{-1}[f]\right]\,,
	\end{equation}
	which follows from straightforward computations.
	
	(2) For any function $y,h$ the following equality holds
	
	\begin{equation}
		\jap{y(1-T\varrho)h} = \jap{h(1-\varrho T)y}\,,
	\end{equation}
	which leads to \eqref{eq:prop2_app} setting $y = (1-\varrho T)^{-1} f,\, h = (1- T\varrho)^{-1} g $.
	
	(3) By Differentiating the equality 
	\begin{equation}
		(1-\varrho T)\sigma = \varrho\,,
	\end{equation}
	we deduce
	
	\begin{equation}
		(1-\varrho T) \partial_{\odot} \sigma = \partial_\odot \varrho (1 + T\sigma)\,.
	\end{equation}
	
	If we can prove that $(1+T\sigma) = (1 - T\varrho)^{-1}$ we conclude.
	
	\begin{equation}
		\begin{split}
			1+T\sigma = 1 + T \varrho(1-T\varrho)^{-1}[1] = (1-T\varrho)(1-T\varrho)^{-1}[1]  + T \varrho(1-T\varrho)^{-1}[1] = (1-T\varrho)^{-1}[1]\,,
		\end{split}
	\end{equation}
	so we conclude.
	
	(4) From the previous relation we deduce the following chain of equality
	
	\begin{equation}
		\partial_{\mu}\jap{\sigma f } \stackrel{\eqref{eq:prop3_app}}{=} \jap{(1-\varrho T)^{-1}\sigma (1- T \varrho)^{-1}[1] f} \stackrel{\eqref{eq:prop2_app}}{=} \jap{\sigma \dr{1}\dr{f}}
	\end{equation}
	
	(5) By differentiating the Euler-Lagrange equation \eqref{eq:euler_lagrange}  with respect to $t_n$, we deduce that
	
	\begin{equation}
		-\int_{\R_+}\ln(\vert x^2-y^2 \vert)\partial_{t_n}\varrho(y)_{\vert_{t_n=0}}\di y  +(-1)^{n+1} 2ix^{2n} + \frac{\partial_{V}\varrho_{\vert_{t_n=0}}}{ \varrho_{\vert_{t_n=0}}} + (-1)^{n}\partial_{V} \mu(\beta,V)_{\vert_{t_n=0}} = 0\,,
	\end{equation}
	Testing the previous variational equation against $\sigma$ we deduce,
	
	\begin{equation}
		(-1)^{n+1}2i\jap{w^{2n}\sigma} - \jap{\sigma T\partial_{t_n}\varrho_{\vert_{t_n=0}}} +\jap{(1+T\sigma)\partial_{V}\varrho_{\vert_{t_n=0}}} = \jap{\partial_{t_n} \mu(\beta,V)_{\vert_{t_n=0}}\sigma} = (-1)^{n+1}\partial_{t_n} \mu(\beta,V)_{\vert_{t_n=0}} \kappa^{-1} \,,
	\end{equation}
	which leads to the conclusion.
	
	(6) From \eqref{eq:dressing} we take the derivative with respect to $t_n$, getting that
	
	\begin{equation}
		(1-T\varrho)\partial_{t_n}\dr{\psi} = T\partial_{t_m}\varrho\dr{\psi}\,,
	\end{equation}
	which leads to the conclusion. Here, we have omitted the explicit dependence of $\varrho$ from the potential, and the evaluation at $0$.
\end{proof}

\subsection{Proof of Theorem \ref{thm:LDP_antisym}}
For reader convenience, we report the statement of the Theorem.
\begin{theorem}

	Consider the functional $\cF_{\textrm{AG}}(\beta,V)[\rho]$ defined as 
	\begin{equation}
		\label{eq:functional_app}
		\begin{split}
			\cF_{\textrm{AG}}(\beta,V)[\rho] &		= -\frac{\beta}{2} \int\int_{\R_+^2}\ln(\vert x^2-y^2 \vert)  \rho(x) \rho(y)\di x \di y - \int_{\R_+}(V(ix)  +V(-ix)  + \ln(|x|)) \rho(x)\di x \\ & +\int_{\R_+}\ln( \rho(|x|)) \rho(|x|)\di x 
		\end{split}
	\end{equation}
	here $\rho(x)$ is an absolutely continuous measure with respect to the Lebesgue one, has support on the positive real line. The previous functional has a unique minimizer  $\rho_{\beta,V}(x)$, which is absolutely continuous with respect to the Lebesgue measure. In particular, $\rho_{\beta,V}(x)$ is the density of states of the Anti-symmetric $\beta$ ensemble in the high temperature regime. Furthermore, 
	
	\begin{equation}
		\fF_{\textrm{AG}}(\beta,V)  = \frac{1}{2}\cF_{\textrm{AG}}(\beta,V)[\rho_{\beta,V}]  + \frac{1}{2}\int_0^1 \ln\left(\frac{\beta}{2}x\right) dx - \frac{\ln(2)}{2}\,.
	\end{equation}
	%
		%
	\end{theorem}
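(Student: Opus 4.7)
The plan is to derive the result from \cite[Theorem 1.1]{Zelada2019} applied to the explicit joint density of positive eigenvalues from Theorem \ref{thm:DF_thm} in the scaling $\wt\beta=\beta/N$, and then to extract the additive constant from the Jacobian \eqref{eq:jacobian} and the explicit normalization \eqref{eq:scaled_constant}. The three building blocks are a high-temperature Coulomb-gas LDP for the eigenvalue empirical measure, a strict-convexity argument for uniqueness of the minimizer, and a Stirling-type computation to pin down the additive constant in the free-energy identity.

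First I would rewrite the density from the Corollary of Theorem \ref{thm:DF_thm} in Gibbs form at speed $N$. In terms of the empirical measure $L_N = N^{-1}\sum_j \delta_{w_j}$, the logarithm of the density is $N\Phi(L_N)+o(N)$, where $\Phi$ collects the three order-$N$ contributions: the mean-field interaction $\frac{\beta}{2}\iint \ln|x^2-y^2|\,dL_N\otimes L_N$, the linear potential $\int(V(ix)+V(-ix))\,dL_N$, and the weight term coming from $w_j^{\beta/(2N)-1}$, whose leading $-\sum_j \ln w_j$ contributes $-\int \ln|x|\,dL_N$ at scale $N$. The growth condition \eqref{eq:growth_condition} guarantees confinement at infinity, while the logarithmic singularity at the origin is integrable against any measure of finite entropy. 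The high-temperature Coulomb-gas LDP of \cite[Theorem 1.1]{Zelada2019} then applies and produces an LDP at speed $N$ for $L_N$ whose rate function, up to an additive constant, is $\cF_{\textrm{AG}}(\beta,V)[\rho]$. Strict convexity of this functional --- the entropy term $\int \rho\ln\rho$ is strictly convex, and the interaction $-\iint \ln|x^2-y^2|\rho(x)\rho(y)\,dx\,dy$ is convex on probability measures since under the substitution $t=x^2$ it becomes the classical logarithmic energy, which is positive semidefinite on zero-charge measures via the Fourier representation of $-\ln|\cdot|$ --- implies that the minimizer $\rho_{\beta,V}$ is unique, and its Euler--Lagrange equation immediately yields absolute continuity with respect to Lebesgue.

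For the free-energy identity, Varadhan's lemma combined with the LDP gives that $-\frac{1}{N}\ln \int_{\R_+^N}(\cdots)\,d\bw \to \cF_{\textrm{AG}}(\beta,V)[\rho_{\beta,V}]$ for the eigenvalue integral. The Jacobian \eqref{eq:jacobian} together with the fact that the Dirichlet integral over the eigenvector components $(q_1,\ldots,q_N)$ is normalized to one reduces the partition function $Z_N^{AG}(\beta/N,V)$ to this eigenvalue integral multiplied by $\zeta_N(\beta/N)$ in \eqref{eq:zeta}. The only remaining input is therefore the asymptotic behaviour of $\zeta_N(\beta/N)$, or equivalently of the explicit product of Gamma functions in \eqref{eq:scaled_constant}. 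I expect this Gamma-function bookkeeping to be the main technical obstacle, since in the high-temperature regime the arguments $\beta j/(4N)$ range over $[0,\beta/4]$ rather than tending to infinity, so classical Stirling is not directly applicable. A clean way around is to establish the identity first for the Gaussian case $V(x)=x^2/2$, where the left-hand side is available in closed form through \eqref{eq:partition_easy_antisym} and the minimizer is known via \eqref{uLdplus1}; direct computation then pins down the constants $\frac{1}{2}\int_0^1 \ln(\beta x/2)\,dx-\frac{\ln 2}{2}$. Extension to general polynomial $V$ follows by differentiating $\fF_{\textrm{AG}}(\beta,V+sU)-\frac{1}{2}\cF_{\textrm{AG}}(\beta,V+sU)[\rho_{\beta,V+sU}]$ in $s$ and using the Euler--Lagrange equation to show that this derivative vanishes, so that the additive constant is independent of $V$ and agrees with the Gaussian computation.
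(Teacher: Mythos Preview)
Your overall architecture matches the paper's: split $Z_N^{AG}(\beta/N,V)$ via the Jacobian \eqref{eq:jacobian} into the $V$-independent factor $\zeta_N(\beta/N)$ and the ordered eigenvalue integral, then invoke \cite[Theorem 1.1]{Zelada2019} on the latter to get $\frac{1}{2}\min_\rho \cF_{\textrm{AG}}(\beta,V)[\rho]$. The strict-convexity argument you add for uniqueness is fine (the paper simply defers this to \cite{Zelada2019}).

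Where you diverge is in the treatment of the additive constant, and here your proposed detour does not buy anything. The paper computes $-\lim_{N\to\infty}\frac{1}{2N}\ln\zeta_N(\beta/N)$ directly from \eqref{eq:zeta}--\eqref{eq:scaled_constant}: after the $N!$ factors cancel, this is a Riemann sum of $\ln\Gamma$ over $[0,\beta]$ (with an integrable logarithmic singularity at the origin), yielding $\frac{1}{2}\big(\int_0^1\ln(\beta x/2)\,dx-\ln 2\big)$. Because $\zeta_N$ comes purely from the change of variables and is manifestly independent of $V$, this single computation settles the constant for \emph{all} potentials at once; your $V$-independence argument via differentiation in $s$ is therefore redundant. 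More importantly, your Gaussian anchoring does not avoid the Gamma-function bookkeeping you flag as the obstacle: extracting $\fF_{\textrm{AG}}(\beta,x^2/2)$ from \eqref{eq:partition_easy_antisym} is the \emph{same} Riemann-sum-of-$\ln\Gamma$ computation, and on top of that you would have to evaluate $\cF_{\textrm{AG}}[\rho_{\beta,x^2/2}]$ explicitly at the Whittaker-function density \eqref{uLdplus1}, which is strictly harder. So drop the detour and just compute $\zeta_N(\beta/N)$ directly, as the paper does.
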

	
	\begin{proof}
		
		First from the definition of Free energy and \eqref{eq:jacobian} we deduce that
		
		\begin{equation}
			\begin{split}
				\fF_{\textrm{AG}}(\beta,V) & = -\lim_{N\to\infty} \frac{1}{2N} \ln\left(Z_N^{AG}(\beta/N,V) \right) \\ & = -\lim_{N\to\infty} \frac{1}{2N} \ln\left(\zeta_N\left(\frac{\beta}{N}\right)\right) \\ & -\lim_{N\to\infty} \frac{1}{2N} \ln\left(\int_{w_1<w_2<\ldots <w_n} \prod_{j=1}^{N}w_j^{\frac{\beta}{2N} -1}e^{\sum_{j=1}^N V(iw_j) + V(-iw_j)}\prod_{1\leq j < i \leq N} \left(w_j^2 -w_i^2 \right)^{\frac{\beta}{N}}d \bw\right)\,.
			\end{split}
		\end{equation}
		
		The first term can be explicitly computed as
		
		\begin{equation}
			\label{eq:free_base}
			-\lim_{N\to\infty} \frac{1}{2N} \ln\left(\zeta_N\left(\frac{\beta}{N}\right)\right) \stackrel{\eqref{eq:zeta}}{=} \frac{1}{2}\left( \int_0^1 \ln\left(\frac{\beta}{2}x\right) dx  - \ln(2)\right)\,.
		\end{equation}
		For the second term, we can apply theorem \cite[Theorem 1.1]{Zelada2019}, to deduce that 
		
		\begin{equation}
			\label{eq:simpify}
			\begin{split}
				-\lim_{N\to\infty}& \frac{1}{2N} \ln\left(\int_{w_1<w_2<\ldots <w_n} \prod_{j=1}^{N}w_j^{\frac{\beta}{2N} -1}e^{\sum_{j=1}^N V(iw_j) + V(-iw_j)}\prod_{1\leq j < i \leq N} \left(w_j^2 -w_i^2 \right)^{\frac{\beta}{N}}d \bw\right) \\ & = \frac{1}{2}\min_{\wt \rho\in \mathcal{P}(\R_+)
				}\cF_{\textrm{AG}}(\beta,V)[\wt \rho]  \,,
			\end{split}
		\end{equation}
		where  $\mathcal{P}(\R_+)$ is the space of probability measure with support on the positive real line and 
		\begin{equation}
			\begin{split}
				\wt  \cF_{\textrm{AG}}(\beta,V)[\wt \rho]&   = 	 -\frac{\beta}{2} \int\int_{\R_+^2}\ln(\vert x^2-y^2 \vert) \rho(x) \rho(y)\di x \di y - \int_{\R_+}(V(ix) +  V(-ix) - \ln(x)) \rho(x)\di x  \\ & +\int_{\R_+}\ln(\rho(x))\rho(x)\di x  \,.
			\end{split}
		\end{equation}
		
		Combining the previous expression with \eqref{eq:free_base}, we deduce the claim.
	\end{proof}

}

        \section*{Data Availability statement}
        All the \texttt{pyhton} and \texttt{mathematica} codes used for the numerical simulation of this paper are freely available at \cite{densityrepo}.
        
        \section*{Conflict of Interest Statement}
        
         The author certify that they have NO affiliations with or involvement in any
organization or entity with any financial interest (such as honoraria; educational grants; participation in speakers’ bureaus;
membership, employment, consultancies, stock ownership, or other equity interest; and expert testimony or patent-licensing
arrangements), or non-financial interest (such as personal or professional relationships, affiliations, knowledge or beliefs) in
the subject matter or materials discussed in this manuscript.
	\bibliographystyle{siam}
	\bibliography{mybib}

\end{document}